 \mathchardef\mhyphen="2D
\newcommand{\DAG}{\textsc{G}\xspace}
\newcommand{\OPT}{\textsc{OPT}\xspace}
\newcommand{\ALG}{\textsc{ALG}\xspace}
\newcommand{\pos}{\textsf{pos}}
\newcommand{\dep}{\ensuremath{\textbf{d}}}
\newcommand{\depx}{\ensuremath{\delta}}
\newcommand{\Id}{\ensuremath{I^{-}}}
\newcommand{\Ic}{\ensuremath{I^{+}}}
\newcommand{\MRFwd}{\textsc{Move-Recursively-Forward}\xspace}
\newcommand{\MRF}{\textsc{MRF}\xspace}
\newcommand{\MRFx}[1]{\textsc{MRF}(#1)\xspace}
\newcommand{\first}{\emph{(i)}\xspace}
\newcommand{\second}{\emph{(ii)}\xspace}
\newtheoremstyle{sltheorem}
{}                % Space above
{}                % Space below
{\slshape}        % Theorem body font % (default is "\upshape")
{}                % Indent amount
{\bfseries}       % Theorem head font % (default is \mdseries)
{.}               % Punctuation after theorem head % default: no punctuation
{ }               % Space after theorem head
{}                % Theorem head spec
\theoremstyle{sltheorem}
\newtheorem{theorem}{Theorem}
\newtheorem{lemma}[theorem]{Lemma}
\begin{document}

\title{Self-Adjusting Packet Classification}

\author[1]{Maciej Pacut}
\author[1]{Juan Vanerio}
\author[1]{Vamsi Addanki}
\author[1]{\\Arash Pourdamghani}
\author[2]{Gabor Retvari}
\author[1]{Stefan Schmid}

\renewcommand\Affilfont{\fontsize{9}{10.8}\itshape}

\affil[1]{ Faculty of Computer Science, University of Vienna, Austria}
\affil[2]{ Budapest University of Technology and Economics, Hungary}

\date{}

\maketitle

\begin{abstract}
This paper is motivated by the vision of more efficient packet classification mechanisms that self-optimize in a~demand-aware manner.
At the heart of our approach lies a~self-adjusting linear list data structure, where unlike in the classic data structure,
there are dependencies, and some items must be in front of the others; for example, to correctly classify packets by rules arranged in a~linked list,
 each rule must be in front of lower priority rules that overlap with it.
After each access we
can rearrange the list, similarly to Move-To-Front, but dependencies need to be respected.

We present a~4-competitive online rearrangement algorithm, whose cost is at most four times worse than the optimal offline algorithm; no deterministic algorithm can be better than 3-competitive.
The algorithm is simple and attractive, especially for memory-limited systems, as it does not require any additional memory (e.g.,~neither timestamps nor frequency statistics).
Our approach can simply be deployed as a~drop-in replacement for a~static datastructure, potentially benefitting many existing networks.

We evaluate our self-adjusting list packet classifier on realistic ruleset and traffic instances. We find that our classifier performs similarly to a~static list for low-locality traffic, but significantly outperforms Efficuts (by a~factor $7$x), CutSplit ($3.6$x), and the static list ($14$x) for high locality and small rulesets.
Memory consumption is~10x lower on average compared to Efficuts and CutSplit.
\end{abstract}

\section{Introduction}

Packet classification~\cite{gupta2001algorithms} is the most fundamental task in communication networks, performed by switches, routers, and middleboxes (e.g., firewalls).
Existing packet classifiers are typically traffic-oblivious: their internal data structures do not depend on the specific traffic patterns they serve.
For example, decision trees are typically designed for good performance under uniform traffic patterns.

This paper is motivated by the hypothesis that the performance of packet classification can be improved by more adaptive approaches. Indeed, in practice, workloads are often far from uniform, and the majority of traffic can be served with few rules~\cite{SarrarUFSH12}.
If a~packet classifier could identify and optimize toward the ``heavy hitter rules'', its reaction time and throughput could potentially be improved.

Realizing the vision of such self-adjusting packet classifiers, however, is challenging. In particular, as shifts in the traffic patterns (and hence also in the heavy hitter rules) may not be predictable, online algorithms need to be designed which strike a~good tradeoff between the benefits and the costs of rearrangements. Ideally, these online algorithms achieve a~good \emph{competitive ratio}: intuitively, without knowing the future demand, their performance is almost as good as a~clairvoyant optimal offline algorithm.
Besides the challenge of continuously identifying what to optimize for, these algorithms should also be efficient in terms of memory usage and runtime, and not require collecting much statistical data (e.g., related to the hit frequencies) or have high convergence times.

Our approach in this paper builds upon the concept of self-adjusting data structures. Intuitively self-adjusting data structures provide the desired adaptability to workloads, and allow exploiting ``locality of reference'', i.e., the tendency to repeatedly access the same set of items over short periods of time.
Such data structures have been studied intensively for several decades already, including self-adjusting linear lists, which were one of the first data structures of this kind~\cite{sleator1985amortized,SleatorT85}.

\pagebreak
Self-adjusting packet classifiers, however, also introduce novel requirements that do not exist in data structures.
To correctly classify a~packet according to a~certain rule, we must not only check if it is a~\emph{match}, but also if it is the \emph{best match}, by excluding matches to higher priority rules first.
This check is unnecessary if the higher priority rule is independent: if the matching domains of the rules do not overlap, examining the rules in any order determines the action uniquely.
We refer to this challenge as \emph{inter-rule dependencies}. Note that usually, we have a few of them~\cite{KoganNRCE14}.
Consequently, any self-adjusting packet classifier must examine the matching conditions in a~specified order.
This introduces constraints in the internal structure of the classifier, specific to its design.

At the core of our approach to design self-adjusting packet classifiers lies a~self-adjusting linear list data structure.
We store rules in the list's nodes, and to classify a~packet, we match it to the rules in that order.
The rule that matched the packet is then moved closer to the head of the list to speed up future matches.
However, the inter-rule dependencies fix the order of some rules in the list:~each rule must be in front of lower priority rules that overlap with it.
This limits the effectiveness of self-adjustments and significantly changes the structure of the problem.

\subsection{Contributions}

We present a~\emph{self-adjusting packet classifier} that automatically optimizes for the traffic it serves.
Our approach is simple and efficient and does not require maintaining timestamps or frequency statistics, as it only relies on simple pointer swaps and comparisons.
It operates without any a~priori assumptions about traffic distribution and with near-instant convergence to (even local) shifts in traffic. The classification time improves with \emph{locality of reference in traffic}: a~tendency to repeatedly match the packet to a~small set of rules over short periods.
Its behavior is inspired by the Move-To-Front algorithm~\cite{sleator1985amortized} that moves the item to the front each time it is accessed.

\vspace{-0.1cm}
\paragraph{Competitive analysis.}

Our approach provides provable guarantees over time, and we present a~competitive analysis~\cite{Borodin1998} accordingly.
Accounting for packet classification needs, we generalize the classic \emph{online list access problem}~\cite{sleator1985amortized} to the variant with precedence constraints among the nodes of the list.
The designed algorithm, Move-Recursively-Forward is 4-competitive with dependencies, reaching the competitiveness of Move-To-Front\footnote{Move-To-Front is 2-competitive in the \emph{free exchange model}, and 4-competitive in the \emph{paid exchange model}~\cite{Reingold1994}. The latter model fits our study better, see the model section (Section~\ref{sec:model}).} known from the classic setting.
No deterministic online algorithm can be better than 3-competitive: a~result carried from the classic list access problem~\cite{Reingold1994}.

\vspace{-0.1cm}
\paragraph{Handling any inter-rule dependencies.}
The inter-rule dependencies may form any partial order, or equivalently any directed acyclic graph (see Figure~\ref{fig:large-dag}).
While most prior work focuses on the case where the dependencies form a~tree, our algorithm can handle any arbitrary dependency DAG.
Handling arbitrary forms of dependencies saves memory and classification time by eliminating the need for rule preprocessing. If we restrict the dependencies to, e.g., trees, we may face an~exponential growth of the ruleset size due to rule splitting during preprocessing~\cite{gupta2001algorithms}.

\vspace{-0.1cm}
\paragraph{Evaluation.}
We compare the system against packet classifiers that build decisions trees based on hierarchical cuts and study whether a self-adjusting list packet classifier can compete with existing packet classifiers in terms of performance.

\paragraph{Applications.}
Our self-adjusting list packet classifier is a~drop-in replacement for static lists and comes with no memory penalty, no significant increase in complexity, and performance improvements under high traffic locality.
It may further enhance existing static lists maintained in the leaves of some hierarchical cut-based packet classifiers.

\subsection{Related Work}

\paragraph*{Online problems with precedence constraints.}
Various online problems were studied in settings with dependencies and precedence constraints.
In scheduling with precedence constraints~\cite{Azar2002}, a~job can only be scheduled after all its predecessors are completed.
In caching with dependencies~\cite{Bienkowski2017} (another online problem motivated by packet classification), an element can be brought into the cache only if all its dependencies are in the cache.

\paragraph{Cost models for online list access.}
There exist interesting results on list access problems with lower list rearrangement costs.
The most popular variant is the \emph{free exchange model}, in which moving an accessed node forward is free.
In such a~setting, the algorithm Move-To-Front is $2$-competitive, and this result is tight~\cite{sleator1985amortized}.
Other papers considered settings in which rearrangements of large portions of the list have linear costs~\cite{Munro2000,Kamali2013}.
The list access problem was also studied under a~\emph{generalized access cost} model~\cite{sleator1985amortized}, where the cost of accessing an $i$-th node is a~general function $f(i)$.
Some papers considered variants of list access with unequal costs of access and rearrangement, but their focus was on the increased cost of rearrangements --- in contrast to our model, where the packet matching (access) is more costly to rearrangements.
The model $P^d$~\cite{Reingold1994} is a~variant of the paid exchange model where the cost of each transposition is $d \ge 1$.
In this model, the best-known algorithm is COUNTER~\cite{Reingold1994}.
Curiously, our practical motivation to packet classification is the first to justify the cost of reconfiguration that is \emph{smaller} than the cost of access and can be viewed as a~generalization of the $P^d$ model for $d < 1$.

\paragraph*{Competitive algorithms for online list access.}
The online list access problem has been studied for decades~\cite{Mccabe1965}, and remains an active field of research~\cite{Albers2020}.
The Move-To-Front algorithm~\cite{sleator1985amortized} is 4-competitive in the \emph{paid exchange model}~\cite{Reingold1994} (the model used in our paper).
Algorithms such as FREQUENCY-COUNT (sorting items according to use frequencies) or TRANSPOSE (on access exchange with the preceding node) are \emph{not} constant-competitive: their competitive ratio grows with the length of the list~\cite{sleator1985amortized}.
No deterministic online algorithm can be better than 3-competitive~\cite{Reingold1994}.
Interestingly, obtaining an optimal 3-competitive algorithm is challenging: even in the setting without precedence constraints, no deterministic algorithm was proven to be better than 4-competitive yet (but a~survey~\cite{KamaliSurvey2013} suggests that an algorithm Move-To-Front-every-other-access may be 3-competitive).
Randomization helps: RANDOM RESET~\cite{Reingold1994} is $\sqrt{7} \approx 2.64$-competitive against an oblivious adversary.
An optimal offline solution is NP-hard to compute~\cite{Ambuhl00}.

\paragraph*{Empirical evaluation of algorithms for online list access.}
Algorithms for list access perform well in practice~\cite{BachrachER02}.
These algorithms were empirically evaluated under inputs exhibiting various forms of locality:
such as generating them according to Zipf's law or Markov distributions~\cite{BachrachER02}.
A notable work in this area introduced the notion of \emph{runs} that provably improve the competitive ratio~\cite{Albers16}.

\paragraph*{Packet classification.}
Various data structures for packet classification were proposed in the literature: lists, tries, hash tables, bit vectors, or decision trees~\cite{gupta2001algorithms,Srinivasan1999,Eppstein2001}, as well as hardware solutions (TCAM).
Packet classifiers are often accompanied by caching systems that provide some adjustability to traffic.
Due to its simplicity, a~linear lookup structure is commonly applied in practice, e.g., in the default firewall suite of the Linux operating system kernel called \texttt{iptables}~\cite{MianoBRBLP19}, the OpenFlow reference switch~\cite{openflow}, and in many QoS classifiers.

\paragraph*{Packet classification with linked lists.}
Online list access with precedent constraints has applications to packet classification.
We are not the first to study self-adjusting list-based packet classifiers.
Existing heuristic solutions typically rely on frequency statistics, and hence may adapt slowly to shifts in traffic.
\begin{itemize}
  \item The Dynamic Rule Reordering algorithm~\cite{Hamed2006} performs a~batch reorganization to a~configuration that is (heuristically) close to optimum in hindsight, using frequency and recency information to assign weights to the rules.
  This greedy solution builds the list by iteratively inserting the next heaviest rule with all its dependencies (weight of which is not considered).
  \item a~local search algorithm~\cite{Zhao04} updates the list on the fly, possibly after each request.
  Neighboring nodes exchange positions if (1) are independent and (2) their weights are inverted. The procedure stops when further neighbor exchanges are impossible.
  Local search algorithms for some optimization problems may get stuck at a~local minimum, and it is challenging to find a~decent starting point for the search.
\end{itemize}
We note that packet classification with self-adjusting lists is NP-hard even without inter-rule dependencies, by a~reduction from \emph{offline list access}~\cite{Ambuhl00}.

\subsection{Organization}

The rest of this article is organized as follows.
Section~\ref{sec:list} introduces the optimization model for \emph{online list access with precedence constraints} and presents the self-adjusting algorithm Move-Recursively-Forward.
Sections~\ref{sec:analysis}~and~\ref{sec:non-uniform} analyze the competitiveness of Move-Recursively-Forward; we dedicate a~section for the basic cost model and one for the generalized model.
Section~\ref{sec:efficient} discusses different implementation strategies leading to different time-memory tradeoffs. In Section~\ref{sec:insertions},
we discuss insertions and deletions: building blocks for rule update mechanisms in the context of competitive analysis.
We evaluate both implementations against hierarchical cut tree-based algorithms (under various ruleset sizes, inter-rule dependencies, and traffic locality) in Section~\ref{sec:empirical} and
conclude and discuss future research directions in
Section~\ref{sec:conclusions}.

\section{Self-adjusting List with Dependencies}
\label{sec:list}

We realize the vision of self-adjusting packet classifiers by organizing match-action rules in a~linked list. a~packet is first matched against the rule at the head of the list, and then, depending on the action passed on to be classified further by the rules down in the list, according to the list order (respecting the precedence constraints).
With this perspective, matching the packet to its rule in a~list can be seen as the \emph{find} operation for a~node in a~list (in this context often called \emph{access}).
The self-adjustments rearrange the nodes after each access.
We speed up repeated matches to the same rule by moving it closer to the head of the list.
This rearrangement comes at a~cost of swapping pointers.
As described so far, the model is equivalent to \emph{online list access}~\cite{sleator1985amortized}.
However, there are special considerations in packet classification that introduce additional challenges.

To account for inter-rule dependencies, keep certain rules in a~fixed relative order, limiting the effectiveness of rearrangements and making the algorithmic problem significantly more challenging.
Besides moving the accessed node forward in the list, we additionally move forward a~carefully chosen set of the dependencies.
This assures the progress can be made even if the accessed node is blocked (a~dependency is directly in front of it).

\begin{figure}[ht]
  \center
  \includegraphics[width=1\textwidth]{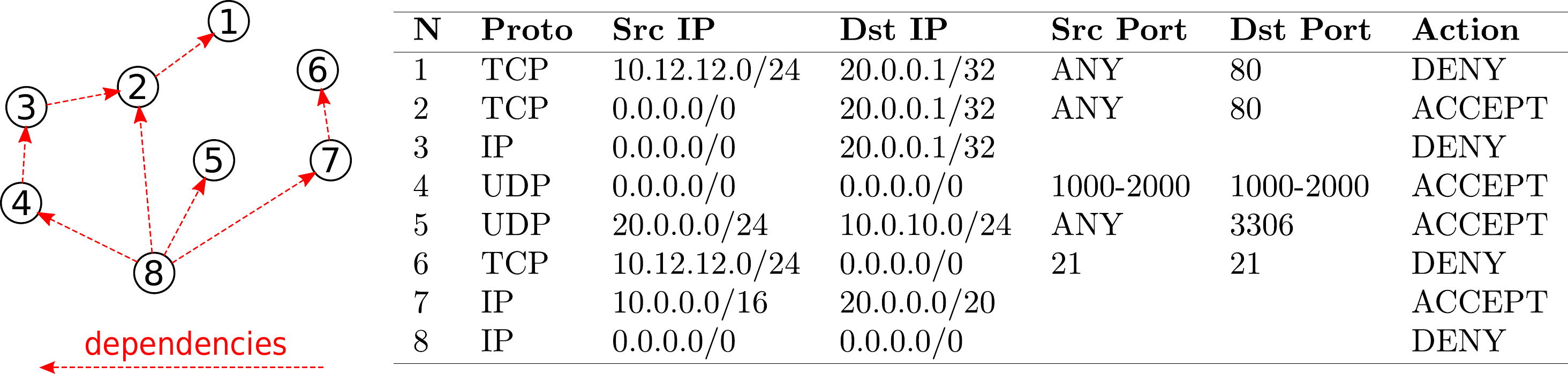}
  \caption{An example of a~table of packet classification rules (right) and the corresponding dependency DAG \DAG among the nodes of the list (left).}
  \label{fig:large-dag}
\end{figure}

\subsection{Optimization Model: Online List Access with Precedence Constraints}
\label{sec:model}

Our task is to manage a~self-adjusting linked list serving a~sequence of requests, with minimal access and rearrangement costs
and accounting for precedence constraints induced by a~directed acyclic graph~\DAG.
If there are no constraints, the problem is equivalent to the classic online list access problem~\cite{sleator1985amortized}.

\paragraph*{The list and the requests}
Consider a~set of $n$ nodes arranged in a~linked list.
Over time, we receive a~sequence $\sigma$ of access requests to nodes of the list.
Upon receiving an access request to a~node in the list, an algorithm searches linearly through the list,
starting from the head of the list, traversing nodes until encountering the accessed node.
Accessing the node at position $i$ in the list costs $i$ (the first node is at position $1$).
Upon receiving a~deletion request, we must access the node first.
Upon receiving an insertion request, we may insert a~node in any feasible position (a list rearrangement may be required to deal with precedence constraints).

\paragraph*{The precedence constraints (dependencies)}
We are given a~directed acyclic graph \DAG of dependency relations between nodes of our list.
This dependency graph induces a~partial order among the nodes (equivalent to the reachability relation in \DAG), and the nodes must obey the partial order in the list in any configuration.
We say that a~node $v$ is a~\emph{dependency} of a~node $u$ if there exists an edge $(u,v)$ in \DAG (we orient edges from the node $u$ to the node $v$ in front of it). Then, in every configuration of the list, $v$ must be in front of $u$.
We assume that the given initial configuration of the nodes obeys the precedence constraints induced by \DAG.

\paragraph*{Node rearrangement}
After serving a~request, an algorithm may choose to rearrange the nodes of the list.
Precisely, the algorithm may perform any number of \emph{feasible} transpositions of neighboring nodes, i.e., transpositions that respect the precedence constraints induced by~\DAG.
We study the \emph{paid exchange model} where all transpositions incur the cost 1; this is different from the
\emph{free exchange model} sometimes considered in the literature that moving the requested node closer to the front is free.

The model poses new algorithmic challenges.
Using algorithms for online list access (such as Move-To-Front~\cite{sleator1985amortized} or Timestamp~\cite{Albers1998}) could lead to infeasible solutions (violating the precedence constraints).
The question arises that whether provably good algorithms even exist for our setting or the problem becomes much harder for online algorithms when faced with dependencies.

\paragraph{Model generalizations} In Section~\ref{sec:non-uniform}, we generalize the model slightly by introducing a~parameter $\alpha\ge 1$ related to the cost of access (a multiplier). In Section~\ref{sec:insertions}, we discuss insertion and deletion operations.

\subsection{Online Algorithms and Competitive Analysis}
\label{sec:competitive}

In this paper the goal is to design competitive algorithms from the model of the previous section.
In this setting, an adversary reveals a~sequence of requests $\sigma$ one-by-one to the online algorithm.
Upon seeing a~request, the algorithm must serve it without the knowledge of future requests. The decisions of the algorithm are permanent: it must be consistent with decisions from the past. For an overview of the competitive analysis framework, we refer to~\cite{Borodin1998}.

\paragraph{Competitive ratio}
We measure
the performance of an online algorithm by comparing its cost with the cost of an optimal offline
algorithm \OPT. Formally, let~$\ALG(\sigma)$, resp.~$\OPT(\sigma)$, be the cost
incurred by a~deterministic online algorithm \ALG, resp.~by an optimal offline
algorithm, for a~given sequence of requests~$\sigma$. In contrast to \ALG, which learns the~requests one-by-one as
it serves them, \OPT has complete knowledge of the entire request
sequence~$\sigma$ \emph{ahead of~time}.
The goal is to design online
algorithms with worst-case guarantees for the ratio. In particular, $\ALG$ is said
to be \emph{$c$-competitive} if there is a~constant~$b$, such that for any
input sequence~$\sigma$ it holds that
\[
	\ALG(\sigma) \leq c \cdot \OPT(\sigma) + b.
\]
Note that $b$ cannot depend on input $\sigma$ but can depend on other
parameters of the problem, such as the number of nodes.
The minimum $c$ for which $\ALG$ is $c$-competitive is called the
\emph{competitive ratio} of $\ALG$.
We say that $\ALG$ is \emph{strictly $c$-competitive} if additionally $b=0$.
All algorithms in this paper are strictly competitive.

The paper mainly focuses on deterministic online algorithms, but in some aspects remarks about \emph{randomized} online algorithms; for an introduction, we refer to~\cite{Borodin1998}.

\subsection{Algorithm Recursively-Move-Forward}
\label{sec:alg}

How to navigate a~DAG of dependencies?
For a~node $y$, we distinguish its \emph{direct dependency}: the dependency node $z$ that would block $y$'s movement forward in the list.
Formally, a~node $y$ is a~\emph{direct dependency} of a~node $z$ if $y$ is the dependency of $z$ that is located at the furthest position in the list.

Our algorithm Move-Recursively-Forward moves forward \emph{the sequence of direct dependencies}: the accessed node is swapped forward until encountering one of its dependencies, at which point the encountered node runs forward recursively.
This strikes a~balance between the access and rearrangement costs: the algorithm exchanges no more pairs than the position of the accessed node (Lemma~\ref{lem:rearrangement-linear}).
In Figure~\ref{fig:follow}, we depict an example run of \MRF after serving a~request.

\begin{figure}[ht]
  \center
  \includegraphics[width=0.8\textwidth]{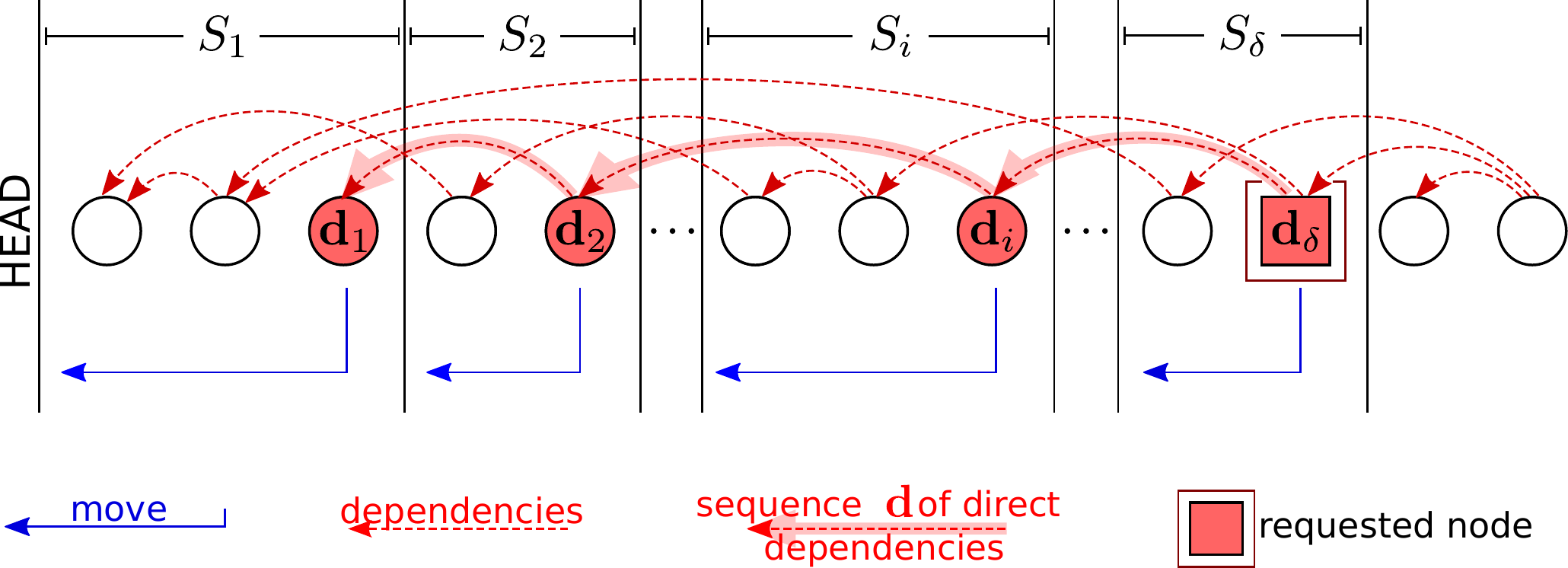}
  \caption{An example of handling a~request by the algorithm \MRF. The red nodes $\dep_i$, formally introduced in Section~\ref{sec:analysis}, are the nodes that the algorithm moves forward (to the position denoted by a~blue arrow). Among multiple dependencies (red dashed arrows), the algorithm \MRF moves forward only the nodes that constitute the path of the direct dependencies (bold red arrow). Sets $S_i$ are used in the analysis, cf. Section~\ref{sec:analysis}.}
  \label{fig:follow}
\end{figure}

\textbf{Algorithm \MRFwd.}
The algorithm uses a~recursive procedure \MRF.
The procedure \MRFx{$y$} moves the node $y$ forward (by transposing it with the preceding nodes) until it encounters any of its dependency nodes, say $z$, and recursively calls \MRFx{$z$}.
Upon receiving an access request to a~node $\sigma_t$, \MRF locates $\sigma_t$ in the list and invokes the procedure \MRFx{$\sigma_t$}.

\medskip
We present the pseudocode of \MRFwd in Algorithm~\ref{alg:follow}.
By $\pos(z)$ we denote the position of node $z$ in the list maintained by the algorithm, counting from the head of the list (recall that the position of the first node is $1$).

\medskip
\begin{algorithm}[h]
 \caption{The algorithm \MRFwd.}
 \label{alg:follow}
\SetAlgoNoLine
\SetKwInOut{Input}{Input}

\Input{An access request to node $\sigma_t$}

 Access $\sigma_t$

 Run the procedure \MRFx{$\sigma_t$}\\

 \BlankLine
  \BlankLine

\SetKwInput{Procedure}{Procedure \MRFx{$y$}}
\Procedure{}
\Indp
 	\uIf{$\text{\upshape $y$ has no dependencies}$}{
		Move $y$ to the front of the list
	}
 	\uElse{
		Let $z$ be the direct dependency of $y$\\
		Move node $y$ to $\pos(z)+1$\\
		Run the procedure \MRFx{$z$}
 }
\end{algorithm}

\medskip

The algorithm can be viewed as a~generalization of Move-To-Front~\cite{sleator1985amortized} to the setting with dependencies. If the dependency graph is empty, the algorithm is equivalent to Move-To-Front.

In the next section, we show that Move-Recursively-Forward is 4-competitive, meaning it performs close to an optimal offline algorithm that dynamically updates its list while knowing the future requests.
The algorithm balances the overhead of adjustments with its benefits:
for each request, the access and the succeeding rearrangements cost roughly the same.

\begin{restatable}{lemma}{rearrangementLinearLemma}
  \label{lem:rearrangement-linear}
  Consider a~single request to a~node $y$ at position $\pos(y)$ handled by the algorithm \MRF.
  The rearrangements after serving the request to $y$ costs at most $\pos(y)$.
\end{restatable}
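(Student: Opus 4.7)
The plan is to unroll the recursion into a chain $y_0 = y, y_1, \dots, y_k$, where for each $i < k$ the node $y_{i+1}$ is the direct dependency selected inside the call \MRFx{$y_i$}, and $y_k$ has no dependencies (so the base case fires and $y_k$ is moved to the head of the list). Let $p_i$ denote the position of $y_i$ in the list at the moment the call \MRFx{$y_i$} begins to execute. The whole proof then reduces to (i)~identifying $p_i$ with the original position of $y_i$ at the start of the request, and (ii)~a short telescoping calculation.

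First I would establish an invariant: $p_i$ equals the position of $y_i$ at the moment the request to $y$ was issued. Since $y_{i+1}$ is a dependency of $y_i$ (an edge in $\DAG$), the chain $y_0 \to y_1 \to \cdots \to y_k$ is a path in $\DAG$, so the partial order induced by $\DAG$ forces $y_i$ to lie strictly in front of $y_j$ in every feasible configuration, whenever $i > j$. The only swaps performed prior to \MRFx{$y_i$} are those inside \MRFx{$y_0$}, \ldots, \MRFx{$y_{i-1}$}. Each such swap moves some $y_j$ one position forward, exchanging it with the node immediately ahead; that node lies strictly between the current positions of $y_{j+1}$ and $y_j$, so it cannot be $y_i$, which (by the argument above) is either equal to or in front of $y_{j+1}$. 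Hence none of these swaps touches $y_i$, and its position is preserved.

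Second I would compute the cost. For $i < k$, the call \MRFx{$y_i$} transports $y_i$ from position $p_i$ to position $p_{i+1}+1$ (the invariant above legitimizes using $p_{i+1}$ here, as $y_{i+1}$'s position is unchanged when \MRFx{$y_i$} runs), incurring $p_i - p_{i+1} - 1$ transpositions. The base call \MRFx{$y_k$} uses $p_k - 1$ transpositions to bring $y_k$ to the head. Summing and telescoping,
\[
\sum_{i=0}^{k-1}(p_i - p_{i+1} - 1) + (p_k - 1) \;=\; p_0 - k - 1 \;\le\; p_0 - 1 \;<\; \pos(y),
\]
which establishes the claim with a little slack to spare.

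The main obstacle is the invariance step: one must be confident that the position $p_{i+1}$ used inside \MRFx{$y_i$} has not been perturbed by the earlier ancestor calls, and that the nodes forming the direct-dependency chain can indeed be lined up consistently across nested calls. This is exactly where the structure of the DAG (more specifically, the fact that the direct-dependency chain is a path in $\DAG$ and hence linearly ordered by the partial order) does the real work; once it is in hand, the cost bound is a one-line telescoping sum.
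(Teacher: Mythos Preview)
Your proof is correct and follows essentially the same telescoping argument as the paper's, with the indexing of the chain reversed (you take $y_0=y$ at the tail, the paper takes $\dep_\depx=y$). The main difference is that you spell out explicitly the invariance step---that earlier calls do not perturb the positions of later $y_i$---which the paper's proof takes for granted.
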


\begin{proof}
  Intuitively, each of the nodes that were moved forward goes through a~disjoint part of the list, in total at most $\pos(y)$.
  For~a~graphical argument, see Figure~\ref{fig:follow}.

  Recall that $y$ is moved to the position right before its furthest dependency, and then recursively, the dependency moves forward until encountering the dependency of its own.
  The movements end when a~moving node reaches the front of the list.
  Each node is moved right to a~position \emph{one place behind} its dependency.

  Let~$\dep$ be the sequence of the nodes that the algorithm moves forward (calls to \MRF), ordered by increasing distance to the head.
  Let $\depx$ be the length of~$\dep$.
  Then, the node $\dep_i$ moves to the position $\pos(\dep_{i-1})+1$ (for formality of the argument, we assume an artificial dependency at the head of the list, $\pos(\dep_0) = 0$).
  Thus, the total number of transpositions is
  \[
    \sum_i^\depx (\pos(\dep_i) - (\pos(\dep_{i-1})+1)) = \pos(\dep_\depx) - \pos(\dep_0) - \depx \le \pos(\dep_\depx).
  \]
  As $y = \dep_\depx$, we conclude that the lemma holds.
\end{proof}

\section{Competitive Analysis of Move-Recursively-Forward}
\label{sec:analysis}

The competitive ratio compares the online algorithm \ALG with an optimal offline algorithm \OPT: the goal is to minimize $\ALG / \OPT$.
We overview the framework of competitive analysis in Appendix~\ref{sec:competitive}.

In this section, we prove 4-competitiveness of Move-Recursively-Forward for the case where $\alpha=1$ (access cost equal to reconfiguration cost).
The competitive ratio is strict, and we note that the algorithm satisfies the definition of a~\emph{memoryless online algorithm}~\cite{Chrobak1991}.
Later in Section~\ref{sec:non-uniform} we generalize the analysis for arbitrary $\alpha$ and show that the algorithm is asymptotically optimal in this setting too.
In both cases, we use a~potential function argument, and we overview it below.

\paragraph*{Analysis overview.}

The proof schema generalizes the argument of Sleator and Tarjan for Move-To-Front~\cite{sleator1985amortized}.
We fix an optimal offline algorithm \OPT and its run on a~given input $\sigma$, and we relate \OPT's run with the online algorithm's run.
A potential function measures the distance between the algorithms, measured in \emph{inversions}: pairs of nodes in reverse order in Move-Recursively-Forward's and \OPT's lists.
Generalization of the proof required a~careful analysis of inversions caused by movements of many nodes after each access.

\paragraph*{Events overview.}
To analyze the competitiveness on $\sigma$, we sum an amortized cost of a~sequence of events of type:
\begin{enumerate}[label=(\Alph*)]
  \item
  An \emph{access event} $R^i(\sigma_t)$ for $i \in \{0, 1\}$. The algorithm serves the access to the node~$\sigma_t$ and runs the Move-Recursively-Forward procedure.
  We assume a~fixed configuration of \OPT throughout this event.
  \item a~\emph{paid exchange event} of OPT, $P(\sigma_t)$, a~single paid transposition performed by \OPT, where it either creates or destroys a~single inversion with respect to the node~$\sigma_t$.
  We assume a~fixed configuration of \MRF throughout this event.
\end{enumerate}

\subsection{Preliminaries and Notation}
\label{sec:preliminaries}
Before analyzing the competitive ratio of the algorithm, we introduce the notation and the sets and sequences of nodes relevant to our analysis.
\paragraph*{Inversions}
An \emph{inversion} is an ordered pair of nodes $(u, v)$ such that $u$ is located before $v$ in \MRF's list and $u$ is located after $v$ in \OPT's list.
The inversion is the central concept in the analysis of the presented algorithms in this paper.
\paragraph*{The rearranged nodes $\dep_j$}
Consider an access to a~node $\sigma_t$ and the node rearrangements at $t$.
Let~$\dep$ be the sequence of the nodes that the algorithm moves forward (calls to \MRF), ordered by increasing distance to the head.
Let $\depx$ be the length of~$\dep$.
We~emphasize that $\dep$ contains the accessed node at the last position, $\sigma_t = \dep_\depx$.
\paragraph*{Values $k$ and $\ell$}
To compare the costs of \MRF and \OPT, we define values $k$ and $\ell$ related to the number of nodes in front of the accessed node $\sigma_t$ in \MRF's and \OPT's list. Precisely,
 let $k$
be the number of nodes before $\sigma_t$ in both \MRF's and \OPT's lists,
and let $\ell$ be the number of nodes before $\sigma_t$ in \MRF's list, but after $\sigma_t$ in \OPT's list.
\paragraph*{Sets $K_j$ and $L_j$}
With the values $k$ and $\ell$, it is possible to analyze the classic algorithm Move-To-Front, but they are not sufficient to express the complexity of Move-Recursively-Forward.
Hence, we generalize the notion of $k$ and $\ell$ to sets of elements related to positions of individual nodes $\dep_j$ in \MRF's and \OPT's lists.
Precisely, let $K_j$ be the set of elements before $\dep_j$ in both \MRF's and \OPT's lists for $j \in [1, \depx]$,
  and let $L_j$ be the set of elements before $\dep_j$ in \MRF's list but after $\dep_j$ in \OPT's list.
We note that these sets are generalizations of $k$ and $\ell$: for the accessed node $\dep_\depx$ we have $k = |K_\depx|$ and $\ell = |L_\depx|$.
\paragraph*{Sets $S_j$}
The sets of nodes between the nodes $\dep$ in \MRF's list are crucial to the analysis.
Intuitively, the node $\dep_i$ moves in front of all the nodes from the set $S_i$.
Let $S_1$ be the elements between the head of \MRF's list and $\dep_1$ (included).
For $j \in [2, \depx]$, let $S_j$ be the set of elements between $\dep_j$ and $\dep_{j-1}$ (with $\dep_{j-1}$ excluded) in \MRF's list.

\medskip

Figure~\ref{fig:sets} illustrates an example of possible composition of sets $K_j$, $L_j$ and $S_j$ for different values of $j$ on a~given access request.
\begin{figure}[ht]
  \center
  \includegraphics[width=0.6\textwidth]{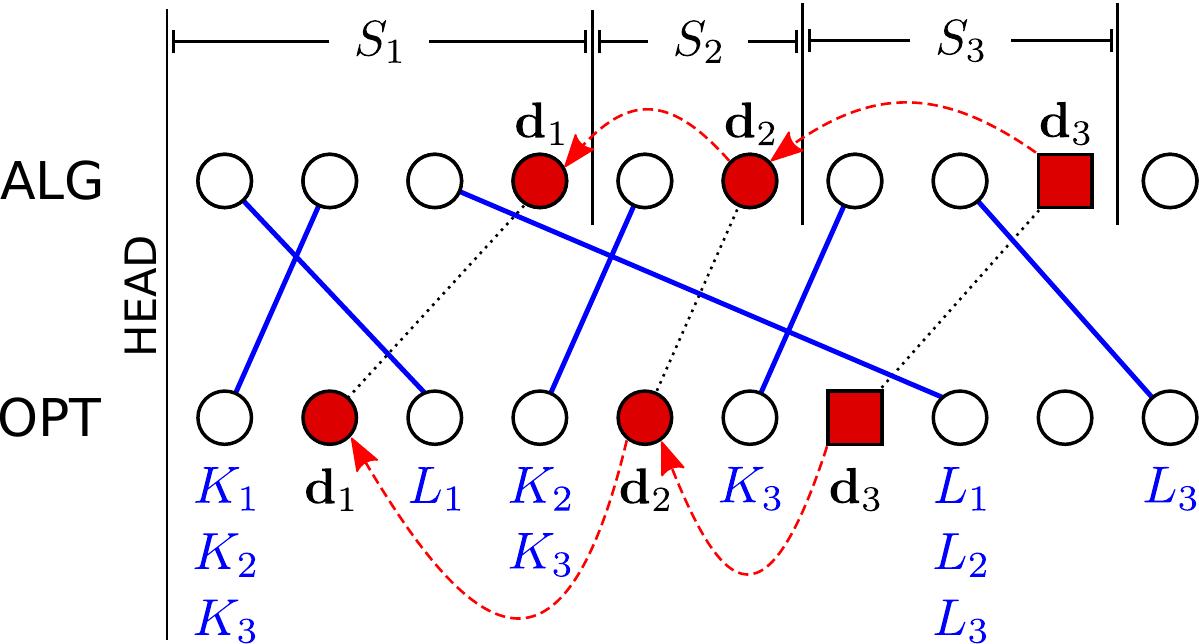}
  \caption{The example illustrates central definitions of sets of nodes used in our analysis. We~depict the positions of nodes in both \MRF's and \OPT's list (joined by solid blue lines). The dotted black lines between the nodes $\dep_i$ help in determining the assignment of nodes to sets: in $K_i$ we have the nodes in front of the dotted black line between $\dep_i$, and in $L_i$ we roughly have the nodes that cross the dotted black lines between $\dep_i$'s.}
  \label{fig:sets}
\end{figure}

\subsection{How Do Rearrangements Change Inversions?}
\label{sec:bounding-inversions}

Consider an access event and the sequence of rearrangements of Move-Recursively-Forward.
We study the influence of the rearrangements on the change in the potential function.
To this end, we separately bound the number of introduced and destroyed inversions.
The Figure~\ref{fig:sets} assists in illustrating the graphical arguments used in this section.

\begin{theorem}
  \label{thm:change-inversions}
  Consider access to the node $\sigma_t$, and fix a~configuration of \OPT at time~$t$.
  Then, the change in the number of inversions due to \MRF's node rearrangement after the access is at~most~$k-\ell$.
\end{theorem}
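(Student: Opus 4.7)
The plan is to track the change in inversions by first identifying exactly which ordered pairs of nodes have their \MRF-order flipped by the rearrangement, and then classifying each such flip by the relative order of the two nodes in \OPT's list. First I would unpack the net effect of \MRF's cascade of recursive moves: once $\dep_j$ is transposed into position $\pos(\dep_{j-1})+1$ it sits directly behind $\dep_{j-1}$, and the later call \MRFx{$\dep_{j-1}$} only swaps $\dep_{j-1}$ with nodes strictly in front of it, so $\dep_j$ together with the nodes newly displaced behind it stay frozen relative to each other for the rest of the rearrangement. Unwinding this, the final list reads $\dep_1, S_1\setminus\{\dep_1\}, \dep_2, S_2\setminus\{\dep_2\}, \ldots, \dep_\depx, S_\depx\setminus\{\dep_\depx\}$, followed by the nodes originally behind $\sigma_t$. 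Comparing to the initial configuration, the only pairs whose \MRF-order is reversed are pairs of the form $(u, \dep_j)$ with $u \in S_j\setminus\{\dep_j\}$: pairs within $\{\dep_1,\ldots,\dep_\depx\}$ keep their order, pairs in which neither element is some $\dep_i$ keep their order, and any $u \in S_j$ distinct from $\dep_j$ flips only with $\dep_j$.

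Next I would convert the change in inversions into an arithmetic sum over these flipped pairs. For a flipped pair $(u,\dep_j)$ with $u \in S_j\setminus\{\dep_j\}$, the contribution is $+1$ when $u$ precedes $\dep_j$ in \OPT (a new inversion $(\dep_j,u)$ is created) and $-1$ when $u$ follows $\dep_j$ in \OPT (the inversion $(u,\dep_j)$ is destroyed). Since $\dep_j\notin K_j\cup L_j$, the definitions of $K_j$ and $L_j$ translate this into
\[
  \Delta \;=\; \sum_{j=1}^{\depx}|S_j\cap K_j|\;-\;\sum_{j=1}^{\depx}|S_j\cap L_j|.
\]

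The proof then reduces to two combinatorial bounds. For the positive part, I would show $\sum_j|S_j\cap K_j|\le k$: each $u\in S_j\cap K_j$ lies before $\dep_j$ in both lists, and since $\dep_j$ precedes $\dep_\depx$ in \OPT by transitivity along the dependency DAG and $u$ already precedes $\dep_\depx$ in \MRF (as $u\in S_j$ with $j\le\depx$), we get $u\in K_\depx$; the sets $S_j$ being pairwise disjoint then injects the union into $K_\depx$. For the negative part, I would show $\sum_j|S_j\cap L_j|\ge\ell$: every $u\in L_\depx$ lies in a unique $S_{j(u)}$, and the same dependency-DAG transitivity places $u$ after $\dep_{j(u)}$ in \OPT, hence $u\in L_{j(u)}$. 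Combining the two bounds yields $\Delta\le k-\ell$. The step I expect to require the most care is the first one: verifying that the recursion's final configuration is exactly as claimed, so that the set of flipped pairs is precisely $\{(u,\dep_j):u\in S_j\setminus\{\dep_j\}\}$; once this structural fact is in hand, the rest is bookkeeping driven by the definitions of $K_j$, $L_j$, $S_j$, and the dependency DAG.
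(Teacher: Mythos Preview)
Your proposal is correct and follows essentially the same route as the paper: the paper likewise shows that moving $\dep_j$ creates at most $|S_j\cap K_j|$ inversions and destroys at least $|S_j\cap L_j|$ inversions, then bounds $\sum_j|S_j\cap K_j|\le|K_\depx|=k$ and $\sum_j|S_j\cap L_j|\ge|L_\depx|=\ell$ via the identical transitivity-along-the-dependency-chain argument you give (factored out as a separate set-relations lemma). Your explicit description of the final configuration $\dep_1,\,S_1\setminus\{\dep_1\},\,\dep_2,\ldots$ and the resulting exact formula $\Delta=\sum_j|S_j\cap K_j|-\sum_j|S_j\cap L_j|$ is slightly sharper than the paper's separate upper/lower bounds on created and destroyed inversions, but this is a presentational rather than substantive difference.
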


To prove this claim, we consider the influence of the Move-Recursively-Forward operation on values $k$ and~$\ell$ (defined for the currently accessed node) by inspecting the sets $K_j$ and $L_j$ (defined for the nodes $\dep_j$).
We separately bound the number of inversions created (Lemma~\ref{lem:created-inversions}) and destroyed (Lemma~\ref{lem:destroyed-inversions}).
Before showing these claims, we inspect the basic relations between the sets $K_j$, $L_j$ and $S_j$ (Lemma~\ref{lem:set-relations}).

\pagebreak
\begin{restatable}{lemma}{setRelationsLemma}
  \label{lem:set-relations}
  The following relations hold:
  \begin{enumerate}
    \item \label{it:ki}
    $
    \bigcup_{j=1}^{\depx} K_j = K_{\depx},
    $
    \item \label{it:li}
    $
    \bigcup_{j=1}^{\depx} (S_j \cap L_j) = \bigcup_{j=1}^{\depx} L_j .
    $
  \end{enumerate}
\end{restatable}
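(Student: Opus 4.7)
The plan is to reduce both identities to two structural facts. Fact A: the recursion sequence $\dep_1, \dep_2, \ldots, \dep_\depx$ is a chain in $\DAG$, because each $\dep_{j-1}$ is chosen by \MRF as the direct dependency of $\dep_j$, so $(\dep_j, \dep_{j-1})$ is an edge of $\DAG$. Consequently, in \emph{any} feasible configuration, and hence in \OPT's list, $\dep_1$ precedes $\dep_2$ precedes $\ldots$ precedes $\dep_\depx$. Fact B: the sets $S_1, \ldots, S_\depx$ partition the prefix of \MRF's list from the head up to and including $\dep_\depx$; more precisely, $S_i$ consists of the nodes at positions $(\pos(\dep_{i-1}), \pos(\dep_i)]$ of \MRF's list (with the convention $\pos(\dep_0)=0$). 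Both facts are immediate from the definitions, but together they do all the work.

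For part 1, the inclusion $K_\depx \subseteq \bigcup_j K_j$ is obtained by taking $j=\depx$. For the reverse, fix $u \in K_j$, so $u$ precedes $\dep_j$ in both lists. In \MRF's list, $\dep_j$ precedes $\dep_\depx$ simply because $\dep$ is indexed by increasing position. In \OPT's list, $\dep_j$ precedes $\dep_\depx$ by Fact A. Hence $u$ precedes $\dep_\depx$ in both lists, i.e.\ $u \in K_\depx$.

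For part 2, the forward direction is trivial. For the reverse, take $u \in L_j$ and let $i$ be the \emph{smallest} index such that $u$ is before $\dep_i$ in \MRF's list; this $i$ exists since $i=j$ works, and satisfies $i \le j$. By minimality of $i$ (or trivially if $i=1$), the position of $u$ satisfies $\pos(\dep_{i-1}) \le \pos(u) < \pos(\dep_i)$, so Fact B places $u$ in $S_i$; one checks $u \neq \dep_{i-1}$, since Fact A would then force $u$ before $\dep_j$ in \OPT, contradicting $u \in L_j$. For $u \in L_i$ it remains to verify that $u$ is after $\dep_i$ in \OPT's list: by Fact A, $\dep_i$ is before $\dep_j$ in \OPT's list, and $u$ is after $\dep_j$ in \OPT's list, so $u$ is after $\dep_i$ as well. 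Therefore $u \in S_i \cap L_i$.

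The one step that is not pure bookkeeping is the invocation of Fact A for \OPT's list in each part; this is where feasibility of \OPT with respect to $\DAG$ is actually used. Without it the chain $\dep_1, \ldots, \dep_\depx$ could appear in an arbitrary order in \OPT's list and both identities would collapse. Once Fact A is granted, the remainder is a short case analysis on positions in \MRF's list, which is why I expect no further obstacles.
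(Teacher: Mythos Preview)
Your proof is correct and follows essentially the same route as the paper's: both rely on (i) the chain property of $\dep_1,\ldots,\dep_\depx$ forcing the same relative order in \OPT's list as in \MRF's (your Fact~A, the paper's ``the order between nodes from $\dep$ is the same in both lists''), and (ii) the partition of the prefix by the $S_j$'s (your Fact~B). Your explicit verification that $u\neq\dep_{i-1}$ and your choice of the \emph{smallest} index $i$ are minor expository refinements of the paper's ``$y$ belongs to some $S_i$ for $i\le j$'', but the argument is the same.
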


\begin{proof}
  First, we prove the equality \ref{it:ki}.
We show inclusions both ways.
Note that the order between nodes from $\dep$ is the same in both \MRF's and \OPT's lists.
Hence, a~node $y \in K_i$ is in front of $\dep_i$ and in front of all $\dep_j$ for $i \le j$ in both \MRF's and \OPT's list.
Consequently, each~node from $K_i$ belongs to all $K_j$ for $i \le j$, and we have $\bigcup_{j=1}^{\depx} K_j \subseteq K_{\depx}$.

Conversely, $K_{\depx} \subseteq \bigcup_{j=1}^{\depx} K_j$ by basic properties of sets, and we conclude that the sets are equal, and the equality holds.

\medskip

  Next, we prove the equality \ref{it:li}.
  We show inclusion both ways.
  Consider any element $y \in L_j$.
  The sets $\{S_j\}$ partition the nodes placed closer to the front of the list than $\sigma_t$ (the requested node), thus $y$ belongs to some $S_i$ for $i \le j$.
  Fix such $i$; we claim that additionally $y \in L_i$:
  \begin{itemize}
    \item $y$ belongs to $S_i$, and hence it is in front of $\dep_i$ in \MRF's list,
    \item $y$ is after $\dep_j$ in \OPT's list (it belongs to $L_j$), and hence it is after $\dep_i$ in \OPT's list (the order of $\dep$ is fixed due to dependencies).
  \end{itemize}
  Hence, any $y \in L_j$ belongs to $S_i \cap L_i$ for some $i$, and we conclude that the inclusion $\bigcup_{j=1}^{\depx} L_j \subseteq \bigcup_{j=1}^{\depx} (S_j \cap L_j)$ holds.
  Conversely, by properties of sets $\bigcup_{j=1}^{\depx} (S_j \cap L_j) \subseteq \bigcup_{j=1}^{\depx} L_j$, and we conclude that the sets are equal and the equality holds.
\end{proof}

\begin{lemma}
  \label{lem:created-inversions}
  Consider an access to a~node $\sigma_t$, and fix a~configuration of \OPT at time $t$.
  Due to rearrangements after the access, \MRF creates at most $k$ inversions.
\end{lemma}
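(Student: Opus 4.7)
The plan is to trace the sequence of transpositions that \MRF performs during the rearrangement phase, charge each newly created inversion to a specific $(\dep_j, y)$ pair, and then bound the total using Lemma~\ref{lem:set-relations}\ref{it:ki}.

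First, I would establish which transpositions actually occur. By the recursive structure of \MRFwd, each node $\dep_j$ is moved from its original position forward to $\pos(\dep_{j-1})+1$, so the transpositions involving $\dep_j$ are exactly the swaps with the nodes of $S_j \setminus \{\dep_j\}$; the node $\dep_j$ stops just behind $\dep_{j-1}$ and never crosses another $\dep_i$. Since the sets $S_j$ are pairwise disjoint and contain only the $\dep$-nodes of their own index, the movements of different $\dep_j$'s take place in disjoint zones of the list, so every single-step transposition can be unambiguously attributed to exactly one index $j$ and one node $y \in S_j \setminus \{\dep_j\}$.

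Next, I would classify the effect of each such transposition on the inversion count. By the assumption in Theorem~\ref{thm:change-inversions}, \OPT's list is fixed throughout the rearrangement, so a single adjacent swap changes the order of precisely one pair relative to \OPT. Fix $\dep_j$ and $y \in S_j \setminus \{\dep_j\}$. Since $y \in S_j$, $y$ is in front of $\dep_j$ in \MRF's list before the swap; after it, $\dep_j$ is in front of $y$. Hence the pair $(\dep_j, y)$ becomes a new inversion exactly when $y$ is still in front of $\dep_j$ in \OPT's list; otherwise, the pair $(y, \dep_j)$ was a pre-existing inversion and is destroyed. The condition ``$y$ in front of $\dep_j$ in both lists'' together with $y \in S_j$ (which already implies $y \ne \dep_j$) is precisely $y \in S_j \cap K_j$. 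Therefore the number of inversions created by moving $\dep_j$ equals $|S_j \cap K_j|$.

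Finally, I would sum over $j$ and exploit disjointness together with Lemma~\ref{lem:set-relations}\ref{it:ki}, which gives $K_j \subseteq K_\depx$ for every $j$:
\[
  \sum_{j=1}^{\depx} |S_j \cap K_j|
  \;=\; \Bigl|\bigsqcup_{j=1}^{\depx} (S_j \cap K_j)\Bigr|
  \;\le\; \Bigl|\bigcup_{j=1}^{\depx} K_j\Bigr|
  \;=\; |K_\depx| \;=\; k.
\]
The main obstacle is the bookkeeping in the second step: one must argue carefully that every transposition is a swap of the form $(\dep_j, y)$ with $y \in S_j$, that no $\dep_j$ ever crosses another $\dep_i$, and that \OPT's list may be treated as frozen so that each transposition contributes at most one new inversion, charged to a distinct element of $K_\depx$.
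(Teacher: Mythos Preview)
Your proposal is correct and follows essentially the same approach as the paper: bound the inversions created by each $\dep_j$ by $|S_j\cap K_j|$, sum over $j$, use disjointness of the $S_j$'s, and invoke Lemma~\ref{lem:set-relations}\ref{it:ki} to conclude $\sum_j |S_j\cap K_j|\le |K_\depx|=k$. Your write-up is somewhat more explicit about the bookkeeping (that $\dep_j$ never crosses another $\dep_i$, that \OPT's list is frozen, that each transposition is uniquely attributed), and you claim equality $=|S_j\cap K_j|$ where the paper only writes $\le$, but the argument and structure are the same.
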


\begin{proof}
  Let $\Ic_j$ be the number of inversions added by moving a~single node $\dep_j$ by \MRF, for $j \in [1, \depx]$.
  To bound $\Ic_j$, we inspect the set $S_j$ of the nodes that $\dep_j$ overtakes, and we reason based on their positions in \OPT's list.
  Moving $\dep_j$ forward creates inversions with nodes in (possibly a~subset of) $S_j \cap K_j$.
  No other node changes its relation to the set $S_j$, hence the inversions for nodes in $S_j$ are influenced only by the movement of $\dep_j$.
  This gives us the bound
  $ \Ic_j \leq | S_j \cap K_j |$.%

  We sum up the individual bounds on $\Ic_j$ for all $j$ to bound the total number of inversions created%

$$ \sum_{j=1}^{\depx} \Ic_j \leq \sum_{j=1}^{\depx} | S_j \cap K_j | = \lvert \bigcup_{j=1}^{\depx} (S_j \cap K_j) \rvert \le | \bigcup_{j=1}^{\depx} K_j |,$$
where the second step holds as the sets $S_j$ are disjoint, and the last step follows by basic properties of sets.%
By Lemma~\ref{lem:set-relations}, equation \ref{it:ki}, we have
$|\bigcup_{j=1}^{\depx} K_j | = |K_{\depx}|$, thus by combining the above inequalities, we have $ \sum_{j=1}^{\depx} \Ic_j \leq |K_{\depx}| = k$, and we conclude that the claim holds.
\end{proof}

\begin{lemma}
  \label{lem:destroyed-inversions}
  Consider an access to the node $\sigma_t$, and fix a~configuration of \OPT at time $t$.
  Due to rearrangements after the access, \MRF destroys at least $\ell$ inversions.
\end{lemma}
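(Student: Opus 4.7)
The plan is to mirror the proof of Lemma~\ref{lem:created-inversions}, counting inversions destroyed by each of the recursive moves separately and then assembling them via Lemma~\ref{lem:set-relations}. Let $\Id_j$ denote the number of inversions destroyed when \MRF moves node $\dep_j$ forward through the set $S_j$, for $j \in [1,\depx]$. The total number of inversions destroyed is $\sum_{j=1}^\depx \Id_j$, and I aim to show this is at least $\ell$.

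First I would establish the per-move bound $\Id_j \ge |S_j \cap L_j|$. Fix $j$ and take any node $y \in S_j \cap L_j$. Since $y \in S_j$, $y$ lies in front of $\dep_j$ in \MRF's list just before $\dep_j$ moves; since $y \in L_j$, $y$ lies after $\dep_j$ in \OPT's (frozen) list. Hence the ordered pair $(y,\dep_j)$ is an inversion before the move, and after $\dep_j$ hops in front of every node of $S_j$, $y$ is behind $\dep_j$ in \MRF's list, so the inversion $(y,\dep_j)$ is gone. The only inversions affected by this move are those between $\dep_j$ and nodes in $S_j$, so these $|S_j \cap L_j|$ destructions are genuine and not cancelled.

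Next, summing over $j$ and using that the sets $\{S_j\}$ are pairwise disjoint (they partition the prefix of \MRF's list ending at $\sigma_t$), I get
\[
  \sum_{j=1}^\depx \Id_j \;\ge\; \sum_{j=1}^\depx |S_j \cap L_j| \;=\; \Bigl|\bigcup_{j=1}^\depx (S_j \cap L_j)\Bigr|.
\]
Applying Lemma~\ref{lem:set-relations}(\ref{it:li}) rewrites the right-hand side as $|\bigcup_{j=1}^\depx L_j|$, which is at least $|L_\depx| = \ell$. This yields the desired bound $\sum_j \Id_j \ge \ell$.

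The main thing to be careful about is the interaction between successive recursive calls: when $\dep_{j-1}$ is moved it might in principle perturb the pairing around $\dep_j$. However, because the $S_j$ are disjoint and the node $\dep_j$ is not moved by \MRFx{$\dep_{j-1}$} (the procedure moves $\dep_{j-1}$ to a position immediately in front of $\dep_{j-2}$, entirely within $S_{j-1}$), the relative positions of $\dep_j$ and the nodes in $S_j$ are unchanged when \MRFx{$\dep_j$} fires. Thus the per-move bounds compose additively, which is the only substantive obstacle and is handled by the disjointness argument above.
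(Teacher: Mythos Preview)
Your proposal is correct and follows essentially the same argument as the paper: bound each $\Id_j \ge |S_j \cap L_j|$, sum over $j$ using disjointness of the $S_j$, apply Lemma~\ref{lem:set-relations}(\ref{it:li}) to rewrite $\bigl|\bigcup_j (S_j \cap L_j)\bigr|$ as $\bigl|\bigcup_j L_j\bigr|$, and conclude via $\bigl|\bigcup_j L_j\bigr| \ge |L_\depx| = \ell$. Your extra paragraph on why the recursive moves do not interfere is a welcome clarification but not an additional ingredient beyond what the paper uses.
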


\begin{proof}
  Let $\Id_j$ be the number of inversions destroyed by moving a~single node $\dep_j$ by \MRF, for $j \in [1, \depx]$.
  To bound $\Id_j$, we inspect the set $S_j$ of the nodes that $\dep_j$ overtakes, and we reason based on their positions in \OPT's list.
  Moving $\dep_j$ forward destroys all inversions with nodes in $S_j \cap L_j$.
  No other node changes its relation to the set $S_j$, hence the inversions for nodes in $S_j$ are influenced only by the movement of $\dep_j$.
  This gives us the bound
  $ \Id_j \ge | S_j \cap L_j |$.%

  We sum up the individual bounds on $\Id_j$ for all $j$ to bound the total number of inversions destroyed.%
  $$ \sum_{j=1}^{\depx} \Id_j \geq  \sum_{j=1}^{\depx} | S_j \cap L_j | = \lvert \bigcup_{j=1}^{\depx} (S_j \cap L_j) \rvert,$$
  where the second step holds as the sets $S_j$ are disjoint.%
  By Lemma~\ref{lem:set-relations}, equation~\ref{it:li} we have
  $|\bigcup_{j=1}^{\depx} (S_j \cap L_j)| = |\bigcup_{j=1}^{\depx} L_j |$.
  Finally, by basic properties of sets, $ | \bigcup_{j=1}^{\depx} L_j | \ge |L_{\depx}| = \ell$,
  and by combining all the above bounds we have $\sum_{j=1}^{\depx} \Id_j \geq  \ell$,
  and we conclude that the lemma holds.
\end{proof}

Combining Lemmas~\ref{lem:created-inversions} and \ref{lem:destroyed-inversions} gives us the joint bound on the change in the number of inversions and proves the Theorem~\ref{thm:change-inversions}.
We note that this bound is consistent with the bound on the changes in inversions for the algorithm Move-To-Front~\cite{sleator1985amortized}, where the inversions were considered with respect to the accessed node only.

\subsection{Bounding the Competitive Ratio}
\label{ssec:ratio}

Finally, we show the main result of this section: the competitive ratio of the algorithm \MRF is 4 for $\alpha=1$.
First, we apply the bounds on the number of inversions (Theorem~\ref{thm:change-inversions}), and then we bound the ratio using a~potential function argument (Theorem~\ref{thm:follow-4-competitive}).%

\paragraph*{Potential function.}
We define the potential function $\Phi$ in terms of the number of inversions in \MRF's list with respect to OPT's list.
Precisely, the potential function is defined as \emph{twice the number of inversions}.
The factor $2$ accounts for the cost of the access and the exchange for each inversion.

Finally, we prove the main result of this section, that the algorithm \MRF is $4$-competitive.
The proof uses a~potential function argument similar to Move-To-Front~\cite{sleator1985amortized}, and it internally uses the Theorem~\ref{thm:change-inversions} to reason about changes in inversion due to Move-Recursively-Forward runs.

\begin{restatable}{theorem}{followCompetitive}
  \label{thm:follow-4-competitive}
  The algorithm \MRF is strictly 4-competitive.
\end{restatable}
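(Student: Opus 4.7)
The plan is the standard amortized analysis with the potential $\Phi$ equal to twice the number of inversions between \MRF's list and \OPT's list, closely following the Sleator--Tarjan template for Move-To-Front but using Theorem~\ref{thm:change-inversions} in place of the single-node inversion bound.

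First I would decompose each step of the execution into the two event types already described: an access event $R(\sigma_t)$, in which \MRF performs its access and the full \MRF recursion while \OPT is kept frozen, followed by a sequence of paid exchange events $P(\sigma_t)$, in each of which \OPT performs a single transposition while \MRF is kept frozen. The decomposition preserves both the real cost of \MRF and the real cost of \OPT, so it suffices to prove separately that for each event the amortized cost $\text{(\MRF cost)} + \Delta\Phi \le 4 \cdot \text{(\OPT cost)}$, and then telescope.

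For an access event, in the notation of Section~\ref{sec:preliminaries}, the position of $\sigma_t$ in \MRF's list is exactly $k+\ell+1$, so \MRF pays $k+\ell+1$ for the access and, by Lemma~\ref{lem:rearrangement-linear}, at most $k+\ell+1$ for its rearrangement. The position of $\sigma_t$ in \OPT's (frozen) list is at least $k+1$, and by Theorem~\ref{thm:change-inversions} the change in the number of inversions is at most $k-\ell$, so $\Delta\Phi \le 2(k-\ell)$. Putting these together,
\[
\text{\MRF cost} + \Delta\Phi \;\le\; 2(k+\ell+1) + 2(k-\ell) \;=\; 4k+2 \;\le\; 4(k+1) \;\le\; 4 \cdot \text{\OPT cost}.
\]
For a paid exchange event, \MRF pays nothing, \OPT pays $1$, and a single transposition changes the inversion count by at most $1$, so $\text{\MRF cost}+\Delta\Phi \le 0+2 \le 4\cdot 1$. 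Summing over all events, using that $\Phi \ge 0$ at all times and $\Phi = 0$ at the start (both lists begin in the same feasible configuration), telescoping yields $\MRF(\sigma) \le 4 \cdot \OPT(\sigma)$ with no additive slack, which is the claimed strict $4$-competitiveness.

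The main obstacle I anticipate is not the inequality chain itself but the bookkeeping around the decomposition: one has to justify that the sets $K_j, L_j, S_j$ used in Theorem~\ref{thm:change-inversions} are defined relative to \OPT's configuration at the instant of the access (frozen through the event), that \OPT's paid exchanges can indeed be amortized one at a time without interacting with \MRF's concurrent behavior, and that \OPT's access cost may legitimately be lower-bounded by $k+1$ at the moment of accessing $\sigma_t$, since the $k$ nodes counted by $K_\depx$ precede $\sigma_t$ in \OPT's list at that instant. Once this accounting is in place the rest is a one-line calculation.
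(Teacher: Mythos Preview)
Your proposal is correct and matches the paper's own proof essentially line for line: the same potential $\Phi$ equal to twice the number of inversions, the same event decomposition into access events (with \OPT frozen) and single paid exchanges of \OPT, the same use of Lemma~\ref{lem:rearrangement-linear} to bound \MRF's rearrangement cost by $\pos(\sigma_t)=k+\ell+1$, the same invocation of Theorem~\ref{thm:change-inversions} for $\Delta\Phi\le 2(k-\ell)$, and the same final chain $2(k+\ell+1)+2(k-\ell)\le 4(k+1)\le 4\,C_{\OPT}(t)$. The bookkeeping concerns you flag (freezing \OPT during the access event, lower-bounding \OPT's access cost by $k+1$) are exactly the ones the paper handles implicitly, so there is no gap.
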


\begin{proof}
  Fix a~sequence of access requests $\sigma$.
  We compare the costs of \MRF and an optimal offline algorithm \OPT on $\sigma$ using a~potential function $\Phi$.
  Let $C_{\MRF}(t)$ and $C_{\OPT}(t)$ denote the cost incurred at time $t$ by \MRF and \OPT respectively.

  First, we bound the cost of \MRF incurred while serving access to a~node $\sigma_t$ at time $t$ (an \emph{access event}).
  This cost consists of the access cost and the rearrangement cost.
  To access the node $\sigma_t$, the algorithm incurs the cost $\pos(\sigma_t)$, and by Lemma~\ref{lem:rearrangement-linear} the rearrangement cost is bounded by $\pos(\sigma_t)$, hence
  $C_{\MRF}(t) \le 2\cdot \pos(\sigma_t)$.

  Next, we bound the amortized cost for every access request served by \MRF.
  The amortized cost is $C_{\MRF}(t) + \Delta \Phi(t)$ for each time $t$.
  By Theorem~\ref{thm:change-inversions}, we bound the change in the number of inversions due to \MRF's rearrangement after serving the request at time $t$ by $\Delta I \le k - \ell$.
  Thus, the change in the potential is $\Delta \Phi(t) \le 2(k - \ell)$.
  As~$\pos(\sigma_t) = k + \ell + 1$, combining these bounds gives us
  \begin{align*}
    C_{\MRF}(t) + \Delta \Phi(t) \le & 2\cdot \pos(\sigma_t) + 2 \left( k - \ell \right) \\
    \le & 2 \left(k+\ell+1\right)+ 2 \left(k - \ell \right)
    \le 4 \cdot C_{\OPT}(t),
  \end{align*}
  where the last inequality follows by $C_{\OPT} \ge k + 1$.

  Note that the bound on amortized cost accounts for possible \emph{paid exchange events}, the rearrangement of \OPT at time $t$.
  Each transposition of \OPT increases the number of inversions by $1$, which increases the LHS by $2$;
  and for each transposition \OPT pays $1$, which increases the RHS by $4$.

  Finally, we sum up the amortized bounds for all requests of the sequence $\sigma$ of length $m$, obtaining
  \begin{equation*}
  C_{\MRF}(\sigma) + \Phi(m) - \Phi(0)  \le 4\cdot C_{\OPT}(\sigma).
  \end{equation*}
  We assume that \MRF and \OPT started with the same list, thus the initial potential $\Phi(0) = 0$, and the potential is always non-negative, thus in particular $\Phi(m) \ge 0$, and we conclude that
  $C_{\MRF}(\sigma) \le 4\cdot C_{\OPT}(\sigma)$.\qedhere
\end{proof}

\section{Competitive Analysis in the General Cost Model}
\label{sec:non-uniform}

Some algorithms may incur a different cost at access than at rearrangement, for example, due to additional computations needed before swapping pointers (as we will elaborate for the second algorithm from Section~\ref{sec:efficient}).
Such algorithms operate under a~different cost model, and our analysis must be strengthened to account for various ratios of the incurred cost.

\medskip
\pagebreak
\noindent
\textbf{Generalized model.} We introduce a~parameter $\alpha \ge 1$ to the model defined in Section~\ref{sec:model}.
Then,
accessing the node at position $i$ in the list costs $\alpha \cdot i$ (instead of $i$). The first node is at position $i=1$.
\medskip

We repeat the competitive analysis of Move-Recursively-Forward in the generalized model to avoid coming to possibly wrong conclusions of a~different cost model. The results from Section~\ref{sec:analysis} hold for $\alpha = 1$.
The central question of this section is:
\begin{center}
  \emph{Should we redesign our algorithm, and if so, does it need to be aware of $\alpha$?}
\end{center}

This generalization is the opposite of the $P^d$ model~\cite{Reingold1994}, where the cost of rearrangement is higher than access: $d \ge 1$.
Curiously, to the best of our knowledge, the cost model with $d < 1$ was not considered before.
We use the parameter $\alpha = 1/d$ for simplicity.
We note that for the converse scenario --- the standard $P^d$ model with $d > 1$, the algorithms must be modified in a~way that depends on the cost of reconfiguration (e.g., COUNTER algorithm~\cite{Reingold1994}) to remain competitive.

In Theorem~\ref{thm:non-uniform-upper}, we answer the question of this section from the standpoint of competitive analysis:
\begin{center}
  \emph{ \MRF is asymptotically optimal for any value of $\alpha$, as long as $\alpha < n$,}
\end{center}
where $n$ is the number of nodes in the list.
The value of $\alpha$ is the cost of multi-field packet matching and is expected to be a~small constant, far smaller than a~typical list length.

\subsection{Lower Bound}

The competitive analysis framework finds computationally impossible results for $\alpha > 1$: lower bounds linear in $\alpha$ for any online algorithm!
The Theorem~\ref{thm:non-uniform-upper} reveals a~more pessimistic scenario than a~lower bound of $3$ for $\alpha = 1$~\cite{Reingold1994}.
We prove the result for the classic \emph{online list access} without dependencies since the result carries to our model.

\begin{theorem}
  \label{thm:non-uniform-lower-det}
  If an algorithm \ALG is a~deterministic $c$-competitive algorithm for online list access in the general cost model with the visiting cost $\alpha$ and $n$ nodes, then $c \ge \frac{n\cdot \alpha}{n+\alpha}$.
\end{theorem}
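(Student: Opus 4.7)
The plan is to apply the classical ``bad adversary plus explicit offline schedule'' template for list access, choosing the ingredients so that both $n$ and $\alpha$ appear in the ratio. Since the problem with an empty \DAG coincides with classical paid-exchange list access, I may ignore precedence constraints throughout: any lower bound proved without dependencies transfers to our model, because adding the partial order only further restricts online algorithms (and their optimum opponents), it cannot help \ALG.

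First, I would specify the adversary: at each step $t$, it requests whichever node currently sits at position $n$ in \ALG's list. The access alone then costs $\alpha\, n$, and any extra transpositions that \ALG may perform only add to its cost, so on a length-$m$ sequence $\sigma$ produced in this way one immediately obtains $\ALG(\sigma) \ge m\, \alpha\, n$. The point is that \ALG cannot avoid paying the maximal access cost, no matter how cleverly it rearranges, because the adversary always targets the far end of \ALG's list.

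Second, I would exhibit a specific (not necessarily optimal) offline schedule that upper-bounds $\OPT(\sigma)$. Using its knowledge of $\sigma$, this offline strategy uses the post-access rearrangement phase of step $t$ to pull $\sigma_{t+1}$ all the way to position $1$, costing at most $n-1$ paid transpositions; the next access is then served at position $1$ for cost exactly $\alpha$. Hence the per-step cost is at most $\alpha + (n-1)$, and summing yields $\OPT(\sigma) \le m\,(\alpha + n - 1) + O(n)$, where the $O(n)$ term absorbs the very first access, which cannot be prefetched from the initial configuration.

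Plugging both bounds into $\ALG(\sigma) \le c \cdot \OPT(\sigma) + b$, dividing, and letting $m \to \infty$ kills the additive slack $b$ and the initial $O(n)$ term, giving $c \ge \frac{\alpha n}{\alpha + n - 1} \ge \frac{n \alpha}{n + \alpha}$ as required. The main care-point I foresee is precisely the ``prefetch'' maneuver: the model insists that rearrangements happen \emph{after} the access of the same step, so the look-ahead moves for step $t+1$ must be scheduled in the after-phase of step $t$, and the first access must be paid for separately. This is a purely bookkeeping nuisance and is harmlessly absorbed into the $O(n)$ slack; once the two cost bounds are in hand, the final ratio is a one-line computation.
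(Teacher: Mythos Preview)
Your proof is correct and follows the same approach as the paper: request the last node in \ALG's list at every step, and upper-bound \OPT by an offline strategy that prefetches the next requested node to the front. You are in fact slightly more careful than the paper---you respect the model's ``rearrange only after the access'' convention and handle the first step separately, obtaining the marginally sharper intermediate bound $\frac{\alpha n}{\alpha+n-1}$---whereas the paper simply says ``move it to the front at time $t-1$'' and bounds the offline per-step cost by $n+\alpha$ directly.
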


\begin{proof}
  Fix any deterministic algorithm \ALG, and consider an input sequence $\sigma$, where $\sigma(t)$ is the request to the last element of \ALG's list at time $t$.
  The cost of traversing a~node is $\alpha$, thus, the cost incurred by \ALG for serving each request is $n\cdot\alpha$.

  Consider the following offline strategy \OPT: before serving a~request to a~node $\sigma(t)$, move it to the front of the list (at time $t-1$).
  The cost of serving each request is then upper-bounded by $n+\alpha$.
  The competitive ratio is then at least $\frac{\ALG}{\OPT} \ge \frac{n\cdot \alpha}{n+\alpha}$.
\end{proof}

We remark that \emph{randomization} does not help more than a~constant factor: the lower bound against randomized online algorithms also follows the linear trend in $\alpha$.
An oblivious adversary~\cite{Borodin1998} cannot access the last node in the randomized algorithm's list, but instead, it may access nodes with uniform probability.
The expected cost of access for the algorithm is roughly $\alpha \cdot n/2$.

\subsection{Upper Bound}

Now we turn to the analysis of Move-Recursively-Forward in the generalized model.

\begin{theorem}
  \label{thm:non-uniform-upper}
  The algorithm \MRF is $\max\{4, 1+\alpha\}$-competitive for list update with dependencies in the general cost model.
\end{theorem}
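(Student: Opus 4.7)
The plan is to lift the Sleator--Tarjan-style potential argument of Theorem~\ref{thm:follow-4-competitive} to the generalized cost model, rescaling the potential so that the inflated access cost $\alpha\cdot i$ is absorbed cleanly. I would define
\[
  \Phi(t) \;=\; (1+\alpha)\cdot I(t),
\]
where $I(t)$ is the number of inversions of \MRF's list with respect to \OPT's list. The choice of coefficient $1+\alpha$ (instead of the original~$2$) is the whole content of the generalization: it is the smallest multiplier that still cancels the $\ell$-term in the amortized bound at an access event, while paying only $1+\alpha$ per paid exchange of \OPT. Structurally, nothing changes: I would reuse Theorem~\ref{thm:change-inversions} and Lemma~\ref{lem:rearrangement-linear} verbatim, since they concern counts of inversions and transpositions, which are independent of $\alpha$.

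For an access event at time $t$, \MRF pays $\alpha\cdot\pos(\sigma_t)$ for the access and, by Lemma~\ref{lem:rearrangement-linear}, at most $\pos(\sigma_t)$ for rearrangement, so $C_{\MRF}(t)\le(1+\alpha)\,\pos(\sigma_t)=(1+\alpha)(k+\ell+1)$. By Theorem~\ref{thm:change-inversions} the rearrangement changes inversions by at most $k-\ell$, so $\Delta\Phi(t)\le(1+\alpha)(k-\ell)$. Adding gives an amortized bound of $(1+\alpha)(2k+1)$. Since \OPT must also visit $\sigma_t$ past the $k$ predecessors matched in both lists, $C_{\OPT}(t)\ge \alpha(k+1)$. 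A short verification then shows $(1+\alpha)(2k+1)\le\max\{4,\,1+\alpha\}\cdot\alpha(k+1)$ for every $k\ge 0$: for $\alpha\ge 3$ one uses $2k+1\le\alpha(k+1)$; for $\alpha\in[1,3]$ both the constant term $(1+\alpha)\le 4\alpha$ and the $k$-coefficient $2(1+\alpha)\le 4\alpha$ hold.

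For a paid exchange event of \OPT, \MRF has no cost while \OPT pays $1$; a single transposition flips at most one inversion, so $\Delta\Phi\le 1+\alpha$. This event is therefore consistent with ratio $\ge 1+\alpha$, which is exactly the second branch of the maximum. Summing the amortized bounds over the whole request sequence $\sigma$ and using $\Phi(0)=0$ (equal starting lists) together with $\Phi\ge 0$ yields $C_{\MRF}(\sigma)\le\max\{4,1+\alpha\}\cdot C_{\OPT}(\sigma)$ with no additive slack, establishing strict competitiveness.

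The main obstacle is really only the choice of coefficient. The original $\Phi=2I$ breaks for $\alpha>1$: an unkilled $(\alpha-1)\ell$ term survives in the access-event amortized bound and prevents closing the argument. Scaling by $1+\alpha$ is the smallest modification that eliminates the $\ell$-contribution, and it simultaneously produces the $1+\alpha$ penalty at paid exchanges -- which is necessary anyway in light of the $\Theta(\alpha)$ lower bound direction of Theorem~\ref{thm:non-uniform-lower-det}. Taking the maximum with~$4$ merely preserves the guarantee of Theorem~\ref{thm:follow-4-competitive} at~$\alpha=1$ and covers the small-$\alpha$ regime where the access-event constraint $2+2/\alpha$ dominates.
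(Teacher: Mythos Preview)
Your approach is essentially identical to the paper's: the same rescaled potential $\Phi=(1+\alpha)\cdot I$, the same split into access events and paid-exchange events, and the same reuse of Theorem~\ref{thm:change-inversions} and Lemma~\ref{lem:rearrangement-linear}. The paper's proof is in fact only a three-line sketch stating exactly this, so your write-up is a faithful expansion. One minor simplification you could make: the access-event amortized cost $(1+\alpha)(2k+1)$ is in fact at most $4\,\alpha(k+1)$ for \emph{all} $\alpha\ge 1$ (since $2k(1-\alpha)+(1-3\alpha)\le 0$), so the case split at $\alpha=3$ is unnecessary---the access event always fits under ratio~$4$, and only the paid-exchange event forces the $1+\alpha$ branch of the maximum.
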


\begin{proof}
The proof is a~straightforward modification of 4-competitiveness proof from the previous section.
To obtain the result, we use a~potential function $\Phi = (1+\alpha)\cdot I$, where $I$ is the number of inversions.
In case of a~\emph{request event}, the ratio is $4$, and in case of a~\emph{paid exchange event} (by OPT), the ratio is $1+\alpha$.
\end{proof}

We note that the lower bound from Theorem~\ref{thm:non-uniform-lower-det} for $3 \le \alpha \le n$ is at most $3$ times smaller than the upper bound from Theorem~\ref{thm:non-uniform-upper}.
Thus, the algorithm Move-Recursively-Forward is asymptotically optimal in this setting.

\section{Efficient Implementation}
\label{sec:efficient}

So far, we abstracted from the implementation details of the algorithm Move-Recursively-Forward.
The algorithm Move-Recursively-Forward can be implemented without any mutable memory besides the linked list:
the dependency relations may be computed on the fly.
The algorithm can be implemented in an array, reaching the lower bounds for (uncompressed) memory consumption.
The recursive procedure is \emph{tail-recursive}, eliminating the need for a~call stack.

In this section, we consider two approaches for representing the dependencies graph: \emph{a simple memoryless algorithm} that computes dependencies on the fly, and \emph{a more efficient algorithm} that precomputes and stores in-memory the dependency relations between nodes.
This enables a~time-memory tradeoff for the self-adjusting packet classifier.

Storing the dependencies in memory speeds up the execution time of direct dependency operation.
Instead of determining rule overlap for each field, we may look it up in memory and compare pointers or checking if a~bit is 1 (depending on implementation).
However, this increases memory consumption: a~time-memory tradeoff known from other systems (e.g., node replication factor in Efficuts~\cite{VamananVV10}).

Storing the dependencies enables even further optimization options: to minimize the dependency graph while preserving the correctness of the algorithm.
Notice that we may remove the dependencies that are indirect: in a~chain of overlapping rules, we may store only the dependencies between neighbors rather than a~full all-pairs graph.
Consider two dependencies of a~given node C: A (indirect) and B (direct).
Consider A is a~dependency of B too, hence their order is fixed.
Thus, the direct dependency of C can never be A.
The indirect dependencies are never relevant in the direct dependency operation, and we may get rid of them, improving the running time and memory consumption.

\subsection{First Algorithm: Memoryless}
\label{sec:mrfmemoryless}

How to represent the dependency graph in memory?
The simplest solution, the \emph{memoryless} algorithm, avoids storing it altogether: to compute the dependencies on the fly by checking rules overlap and priority. In this implementation, matching a~packet to a~rule costs roughly as much as checking the dependency relation, fitting our model with $\alpha=1$.
This is our baseline implementation, and we evaluate this solution in Section~\ref{sec:empirical}.
This implementation is simple: below 150 lines of C++ code, whereas some packet classifiers such as EffiCut have thousands of lines of code~\cite{VamananVV10}.

\subsection{Second Algorithm: Storing the Dependency Graph in Memory}
\label{sec:mrffast}

We propose an alternative implementation, designed to trade memory for execution time by storing the dependency graph in memory.
By accessing the precomputed dependency relations, we may expect faster execution of the rearrangement logic.

The gains of storing the dependency graph vary with the representation. An adjacency matrix guarantees constant-time lookup but at the cost of a~large memory footprint.
An adjacency list guarantees lookup proportional to the number of neighbors and saves a~substantial amount of memory, given that common dependency graphs are sparse.
A sparse matrix representation shows decent asymptotic runtime, employing binary search.
In practice, solutions based on hashing may perform very well.
We do not pursue the representation question further, but we assume its efficient memory footprint: proportional to the number of stored dependencies.

Storing all edges of the dependency graph is not necessary, as it can even be detrimental to the algorithm's performance.
Consider a~chain of $m$ rules, where the lower priority rules overlap with higher priority rules (inducing the dependency relation).
In the dependency graph, the last rule has $m-1$ dependencies, and the total number of edges is in order of $m^2$.
The graph contains not only edges between neighbors but also its \emph{transitive closure}.

Instead of storing all edges, a~\emph{minimum equivalent DAG}~\cite{MoylesT69} is sufficient.
Removing indirect dependencies speeds up the direct dependency operation significantly.
Moreover, reducing the graph affects the memory consumption, assuming memory-efficient representations.

\pagebreak
At this point, the cost model changes, and we generalize the competitive analysis to the updated setting, finding that the algorithm is still near-optimal.
The constant $\alpha$, defined in Section~\ref{sec:model} is larger than $1$, because the rearrangement cost (swapping) becomes cheaper than before: it was roughly the cost of matching the packet, as it required determining dependencies. Now, it consists only of a~pointer comparison and swap, while matching a~packet to a~rule needs to examine multiple fields, yielding $\alpha > 1$.

\subsection{Competitive Analysis Remarks}

A question arises around various rearrangement strategies available: if the algorithm balances the rearrangement cost, should we rearrange the list more aggressively?
For example, by moving more (or all) dependencies forward, we may even further improve the position of the accessed node for future accesses.
This would be ill-advised, as such an algorithm is less competitive: moving all to the front is $O(d)$-competitive, where $d$ is the maximum depth of the dependency graph.

What would be the effect of rearranging less often?
For example, to rearrange the list after every 10th access, or with probability $1/10$.
These more conservative algorithms were analyzed for the classic list access (without dependencies), and were designed for a~different setting, where the cost of an exchange is \emph{larger} than the access cost.
In this model, the best-known algorithm is COUNTER~\cite{Reingold1994}.
However, such modifications are unnecessary:
in Section~\ref{sec:non-uniform}, we claim that the algorithm Move-Recursively-Forward remains near-optimal for small $\alpha$.

\section{Handling Insertions and Deletions}
\label{sec:insertions}

In data structures, such as linked lists, the sets of nodes can change over time.
Hence, many data structures (including the ones for packet classification) also support insertions and deletions in addition to access operations.

To our best knowledge, we are the first to explore the problem of insertions and deletions in list data structures with dependencies.
This section discusses the feasibility of insertions, as an algorithm may require reconfiguring other rules before inserting a~new rule.
We propose assumptions that allow for constant competitiveness with insertions.

\subsection{Model: Online List Update with Precedence Constraints}
Consider the online list \emph{update} problem with dependencies with three request types: \emph{accesses} to existing nodes in the list,
\emph{insertions} of new nodes and \emph{deletions} of existing nodes in the list.
Upon receiving an insertion request, the node reveals its dependencies with the nodes that are already in the list.
The revealed dependencies must be obeyed until a~node is deleted.
Note that a~node may have dependencies with nodes that will be inserted later, but this information is unknown until then.

\subsection{Packet Classification Challenges: Non-Transitivity}

The desired property of possible dependency graphs is \emph{transitivity}: the edges of the graph induce a~relation that is a~\emph{transitive closure}.
We report that the packet classification rule dependencies \emph{may not} have the transitivity property.
An inserted rule may reveal dependencies between the existing rules that become violated.
For an example, see Table~\ref{fig:example4_2}.

\begin{table}[!h]
  \centering
\begin{tabular}{lllllll}
\hline
\textbf{N} & \textbf{Proto} & \textbf{Src IP} & \textbf{Dst IP} & \textbf{Src Port} & \textbf{Dst Port} & \textbf{Action} \\ \hline
1 & TCP & 10.1.1.1 & 20.1.1.1 & ANY & 80 & ACCEPT \\
2 & TCP & 10.1.1.2 & 20.1.1.1 & ANY & 80 & ACCEPT \\
3 & TCP & 10.1.1.3 & 20.1.1.1 & ANY & 80 & ACCEPT \\
x & TCP & 10.1.1.0/24 & 20.1.1.1 & ANY & ANY & DENY \\
4 & TCP & 0.0.0.0/0 & 0.0.0.0/0 & ANY & 445 & ACCEPT \\
5 & TCP & 0.0.0.0/0 & 0.0.0.0/0 & ANY & 17 & ACCEPT \\
6 & TCP & 0.0.0.0/0 & 0.0.0.0/0 & ANY & 18 & ACCEPT \\
\hline
\end{tabular}
\caption{
Consider a~set of $n$ nodes with no dependencies between them.
Then, consider an insertion of a~rule $x$, which reveals the dependencies with all existing rules.
The insertion of rule $x$ enforces the order between all existing rules.
The dependencies imposed by the rules are not transitive. With~the transitivity assumption, the rules (4-6) would have dependencies to rules (1-3), and the rule $x$ would be inserted between them without rearrangements. The situation is illustrated in Figure~\ref{fig:example4_1}.}
  \label{fig:example4_2}

\end{table}

Lack of transitivity may lead to costly rearrangements required to insert a~single node.
Figure~\ref{fig:example4_1} depicts an example.

How to deal with insertions without preprocessing rules?
Faced with the need to rearrange before insertion, we recommend finding the minimum rearrangement that satisfies new constraints. Examples such as Figure~\ref{fig:example4_1} cannot happen too often: each time they add dependencies, and the adversary would need to delete nodes to deceive the algorithm again. An amortized analysis may lead to constant-competitiveness, and we leave it to future work.

\begin{figure}[ht]
  \center
  \includegraphics[width=1\textwidth]{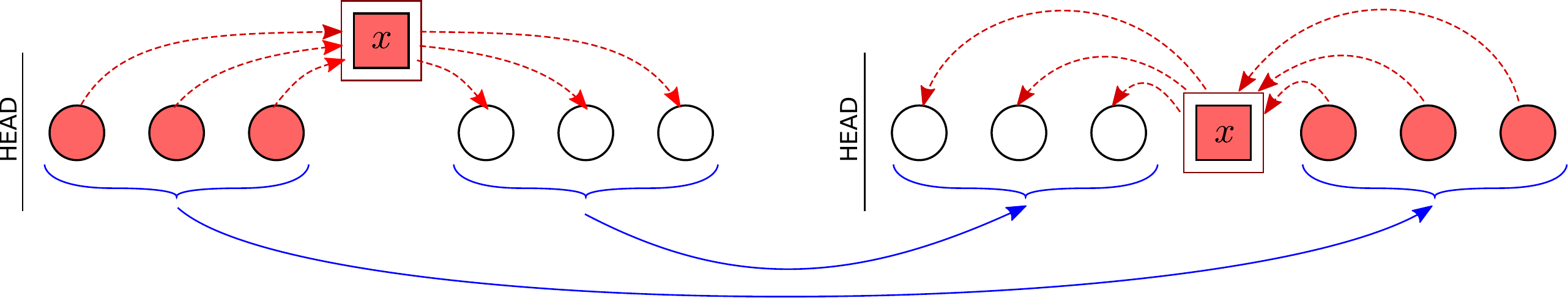}
  \caption{
  Consider a~set of $n$ nodes in a~list, with no dependencies between them.
Then, consider the insertion of a~node $x$, which reveals the depicted dependencies (arrow points to the node that must precede in the list). Here, $x$ has
$\frac{n}{2}$ dependencies from the first half list (the red nodes) to $x$, and
$\frac{n}{2}$ dependencies from $x$ to the nodes of the second half of the list (the white nodes).
To insert $x$ without violating the dependencies, the nodes on the first half of the list must move ahead of the nodes on the second; thus, the cost of rearrangement is $(\frac{n}{2})^2 $.
  }
  \label{ex:general}
  \label{fig:example4_1}
\end{figure}

\subsection{Constant-competitiveness With Transitivity Assumption}

We build on assumptions present in the online list update problem~\cite{sleator1985amortized},
where insertions cost $n$ and nodes are accessed before deletion~\cite{sleator1985amortized}.
Additionally, we require \emph{transititivy}.

Under these assumptions, any algorithm inserts the node $\sigma_t$ at any position of its choice that respects the dependency DAG $G$, \emph{without incurring further cost}.
A position that satisfies all constraints already exists in every configuration due to transitivity assumption.
Consider the universe of all nodes, including the ones that may be inserted in the future.
Assume the transitivity of their precedence constraints, meaning that if a~node $v_1$ must be in front of $v_2$, and $v_2$ must be in front of $v_3$, then $v_1$ must be in front of $v_3$, \emph{even if $v_2$ is currently not present in the list}.

\medskip

\noindent
\textbf{Move-Recursively-Forward with insertions and deletions.}
The algorithm's logic for access does not change.
Upon receiving an insertion request, insert the new node into an arbitrary feasible position (respecting the dependencies).
Upon receiving a~deletion request, the algorithm deletes the node.

\begin{theorem}
    The algorithm Move-Recursively-Forward with insertions and deletions is strictly\\\mbox{4-competitive} for $\alpha=1$, with transitive \DAG, with insertions costing $n$ and accesses before deletes.
  \label{thm:insertions-constant}
\end{theorem}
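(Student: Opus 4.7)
The plan is to extend the potential function argument from Theorem~\ref{thm:follow-4-competitive} by adding two new event types (insertions and deletions) and showing that the amortized cost inequality $C_{\MRF}(t) + \Delta\Phi(t) \le 4\cdot C_{\OPT}(t)$ continues to hold for each such event. We keep the potential $\Phi = 2I$, where $I$ is the number of inversions between \MRF's and \OPT's lists. Access events and paid exchange events are handled exactly as in Section~\ref{sec:analysis}, so only the new event types need attention.

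For an insertion event, the feasibility question is dispensed with first: under the transitivity assumption, every configuration of \MRF's list admits a~position that respects all constraints induced by \DAG together with the new node's revealed dependencies. Hence \MRF pays exactly the insertion cost $n$ (no prior rearrangement is needed), and the same holds for \OPT by assumption. It then remains to bound $\Delta\Phi$. Let $x$ be the inserted node and let $i_{\MRF}$ and $i_{\OPT}$ be its position in each list after insertion. Every new inversion must involve $x$ paired with some existing node $y$; such a~pair is an inversion iff $y$ lies on different sides of $x$ in the two lists. Since there are $n-1$ other nodes after insertion, at most $n-1$ inversions are created and none are destroyed, giving $\Delta\Phi \le 2(n-1)$. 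Therefore
\[
C_{\MRF}(t) + \Delta\Phi(t) \;\le\; n + 2(n-1) \;=\; 3n - 2 \;\le\; 4n \;=\; 4\cdot C_{\OPT}(t).
\]

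For a~deletion event, the preceding access is charged as an ordinary access event (already covered by Theorem~\ref{thm:follow-4-competitive}), so the deletion itself need only remove $x$ from both lists. Removing a~node eliminates exactly the inversions that involved $x$ and creates no new inversions, hence $\Delta\Phi \le 0$; with zero residual cost on both sides, the amortized inequality is trivial. Summing amortized bounds over all events of the request sequence, telescoping the potential, and using $\Phi(0)=0$ together with $\Phi(m)\ge 0$ yields $C_{\MRF}(\sigma) \le 4\cdot C_{\OPT}(\sigma)$, with no additive constant, hence strict 4-competitiveness.

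The main obstacle, which is already addressed by the hypotheses of the theorem, is establishing that the insertion can be done by \MRF at cost exactly $n$; without transitivity, as illustrated by Table~\ref{fig:example4_2} and Figure~\ref{fig:example4_1}, an algorithm could be forced into $\Theta(n^2)$ rearrangement before it could legally place $x$, which would destroy the constant amortized bound. The transitivity assumption is exactly what rules this out. Beyond that, the only quantitative step is the bound of $n-1$ on inversions created by a~single insertion, which follows directly from the fact that inversions involving $x$ are determined by the sides of $x$ in the two lists.
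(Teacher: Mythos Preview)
Your proposal is correct and follows essentially the same potential-function argument as the paper: the same $\Phi=2I$, the same feasibility observation for insertions under transitivity, and the same $O(n)$ bound on newly created inversions. The only cosmetic difference is in the deletion case---the paper keeps the mandatory access and the removal as a single event (using that removal destroys the $\ell$ inversions involving the deleted node to cancel the $\ell$ term in the access cost), whereas you split it into an ordinary access event plus a zero-cost removal with $\Delta\Phi\le 0$; both accountings yield the same amortized bound.
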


\begin{proof}
  Fix a~sequence of requests $\sigma$ of length $m$.
  We compare costs of \MRF and an optimal offline algorithm \OPT on $\sigma$ using the potential function $\Phi$, defined \emph{twice the number of inversions} (as before).
  Let $C_{\MRF}(t)$ and $C_{\OPT}(t)$ denote the cost incurred at time $t$ by \MRF and \OPT respectively.

  The transitivity assumption enables insertion without node rearrangements:
  a~feasible position for insertion always exists.
  Thus, for each insertion, both \MRF and \OPT pay exactly $C_{\OPT}(t)=C_{\MRF}(t)=n$.
  Note that \MRF and \OPT might place the node at different positions; thus, we account for at most $n$ inversions created, and consequently, for the change in the potential, we have $\Delta \Phi(t) \leq 2 n $.
 This gives us
    $C_{\MRF}(t)+\Delta \Phi(t) \leq 3 n \leq 3\cdot C_{\OPT}(t)$.

  In case of a~deletion request, exactly $\ell$ inversions are removed, and since $C_{\OPT}(t)=k$ and $C_{\MRF}(t)=k+\ell$, we have
    $C_{\MRF}(t)+\Delta \Phi(t) \leq (k+\ell)-2 \ell \leq k \leq 1 \cdot C_{\OPT}(t)$.

  Similarly to the proof of Theorem~\ref{thm:follow-4-competitive}, the amortized cost for an access request is
  $C_{\MRF}(t) + \Delta \Phi(t) \le 4 \cdot C_{\OPT}(t)$.
  We sum up the amortized bounds for all requests of the sequence $\sigma$, of all types (access, insertion, deletion), and we conclude that \MRF is 4-competitive.
\end{proof}

\section{Empirical evaluation}

\label{sec:empirical}
\label{sec:hypothesis}

We now empirically evaluate the benefits and limitations of our self-adjusting packet classifier by comparing it to existing packet classifiers. We evaluate two implementation variants: MRF (Section~\ref{sec:mrfmemoryless}) and MRF-Fast (Section~\ref{sec:mrffast}). We investigate two questions:
\medskip

\noindent \textit{\textbf{Do self-adjustments improve the classification time?}}
We show that compared to a~static list, the self-adjusting list (MRF) improves the classification time by at least $2$x on average and at least $10$x better under high locality in traffic. Compared to hierarchical cut classifiers, under high locality and for small ruleset sizes, MRF improves classification time.
Specifically, our results show that MRF's classification time is $7.01$x better than Efficuts~\cite{VamananVV10} and $3.64$x better than CutSplit~\cite{li2018cutsplit} for small ruleset sizes and high traffic locality.

\noindent \textit{\textbf{What is the benefit in memory consumption?}}
Our evaluation shows that the simple data structure of the self-adjusting list classifier dramatically improves memory consumption: on average $10$x lower than hierarchical cut classifiers. The implementation MRF-Fast (the second algorithm from Section~\ref{sec:efficient}) uses slightly more memory --- but still lower than any decision-tree-based classifiers. This enables the time-memory tradeoff between MRF and MRF-Fast.

Figures~\ref{fig:heatmap-cutsplit} and \ref{fig:heatmap-efficuts} show the key findings of our evaluation, presenting the average nodes traversed (representing classification time) normalized to CutSplit and Efficuts. We observe that MRF's classification time improves faster than CutSplit and Efficuts in the region of smaller rulesets and high traffic locality. However, MRF's classification time degrades below the competition outside these regions.

\begin{figure*}[h]
\center
\begin{subfigure}{0.48\linewidth}
\includegraphics[width=1\linewidth]{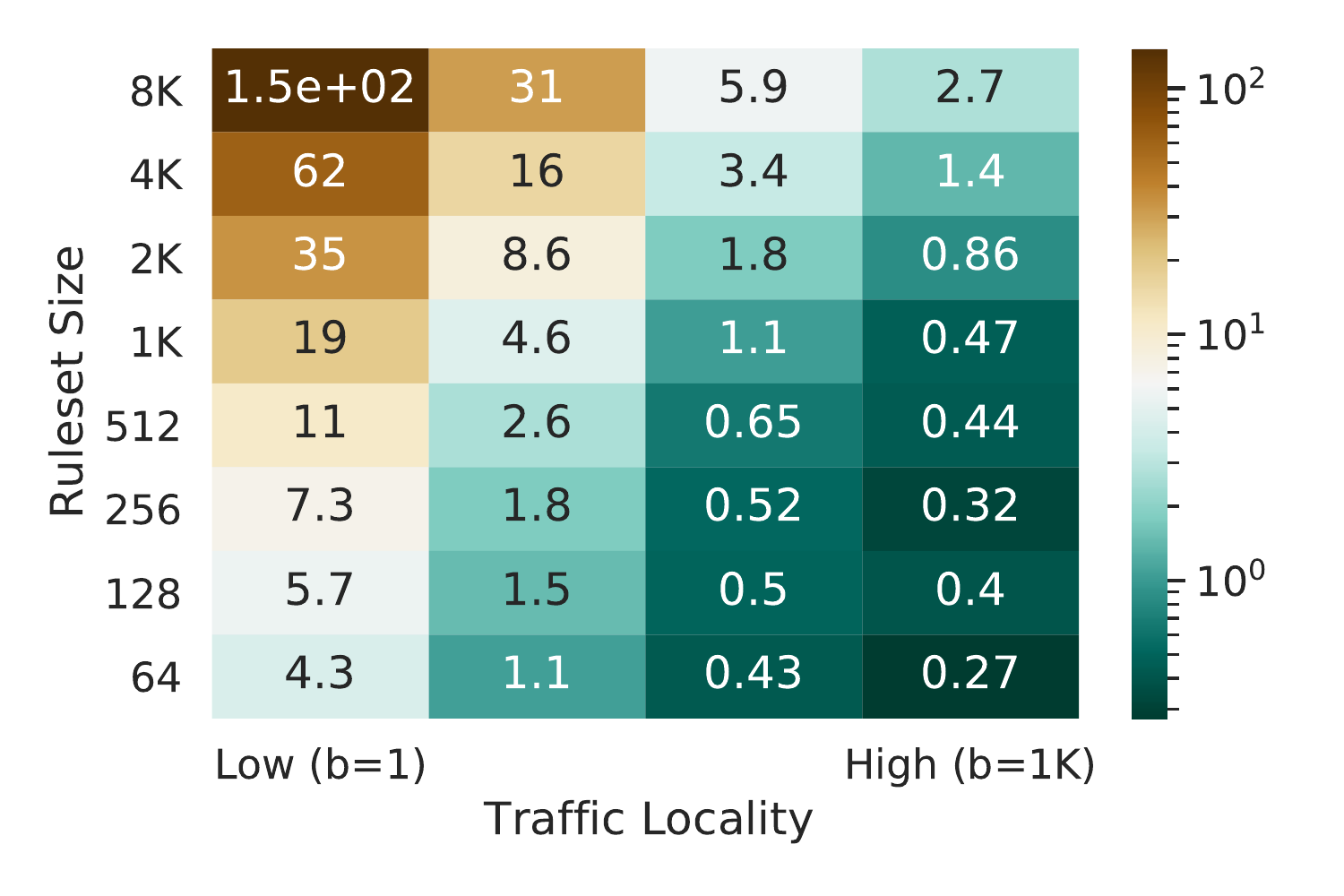}
\subcaption{$\frac{MRF}{CutSplit}$ normalized Avg. nodes traversed.}
\label{fig:heatmap-cutsplit}
\end{subfigure}\hfill
\begin{subfigure}{0.48\linewidth}
\includegraphics[width=1\linewidth]{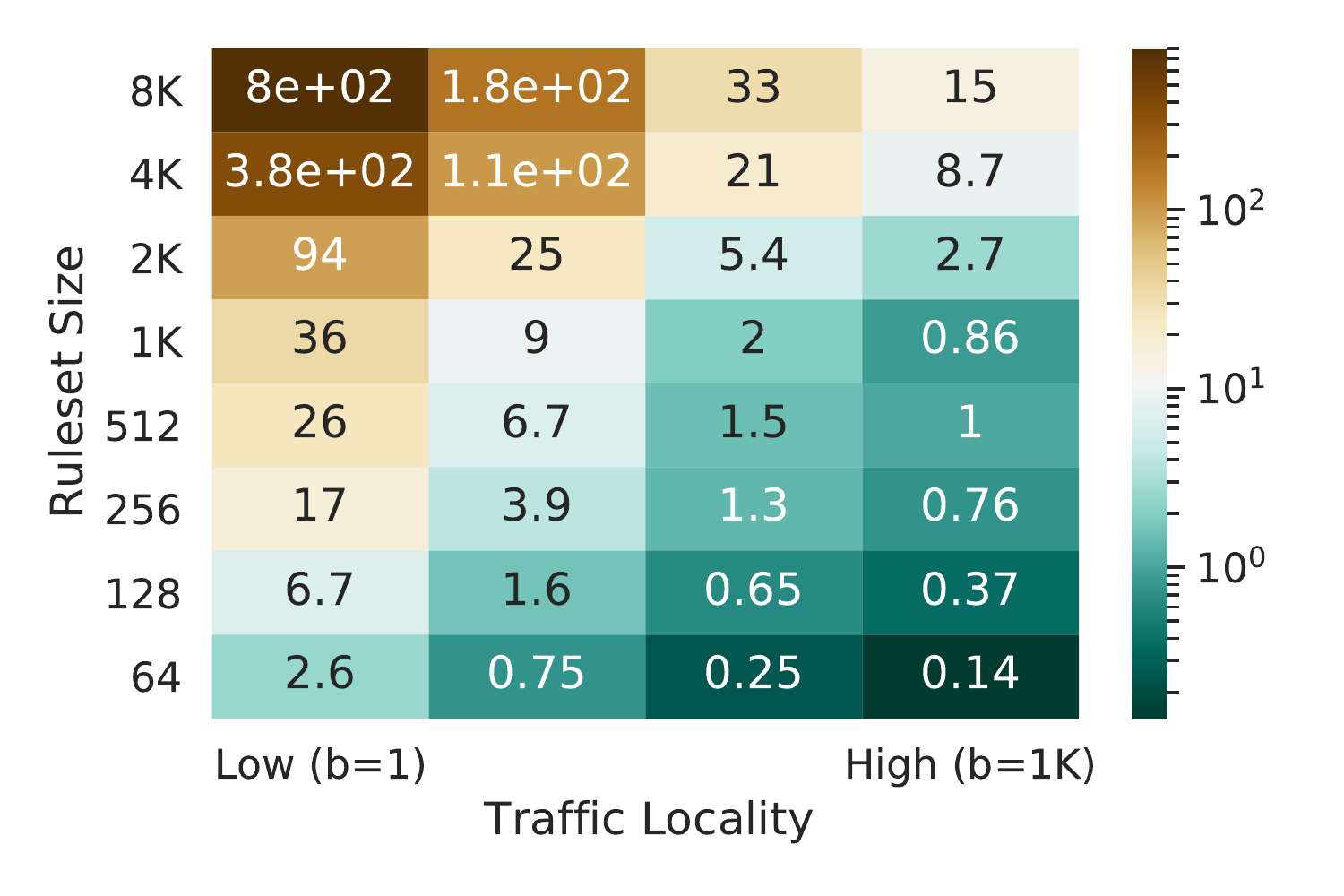}
\subcaption{$\frac{MRF}{Efficuts}$ normalized Avg. nodes traversed.}
\label{fig:heatmap-efficuts}
\end{subfigure}
\caption{MRF out-performs static decision tree-based packet classifiers for small ruleset sizes and at high traffic locality (towards the bottom right region of the ruleset size vs. locality dimensions).}
\end{figure*}

\subsection{Methodology}

\paragraph*{Rulesets and Packet traces:}
We use Classbench~\cite{4237157} to generate rulesets and traffic resembling real-world scenarios.
We examine a~wide range of rulesets sizes and traffic locality. To control traffic locality, we use Classbench's parameter $Pb$: the Pareto distribution $scale$ parameter~\cite{doi:10.1080/00401706.1987.10488243}.

\paragraph*{Comparison with existing packet classifiers:}
We compare MRF and MRF-Fast against list-based and decision tree-based packet classifiers. In our evaluations, a~static-list serves as a~baseline for a list-based approach. Among wide range of decision tree-based packet classifiers, our baselines include Hicuts~\cite{820051}, Hypercuts~\cite{singh2003packet}, Efficuts~\cite{VamananVV10} and the more recent packet classifier CutSplit~\cite{li2018cutsplit} which combines cutting and splitting techniques. We use the default parameter settings for all our baselines.

\paragraph*{Simulations:}
We built a~custom simulator written in C++ and implemented all the baselines, including MRF and MRF-Fast. We have faithfully merged the online available source code of our baselines into our simulator for a~common ground of comparison. We additionally implemented the packet lookup function for Hicuts, Hypercuts, and Efficuts\footnote{The original source code of Hicuts, Hypercuts, and Efficuts does not implement a~packet lookup function, and instead estimates by the worst case: the maximum depth of the tree.}. In the implementation of MRF-Fast we store the dependencies as a~linked-list at each node. We did not apply the dependency structure minimization (thus, the actual memory consumption by MRF-Fast may be lower). We ran our simulations on a~server with 40 CPU cores (Intel(R) Xeon(R) Gold 6209U CPU @ 2.10GHz) and 192GB RAM. Specifically, each pointer consumes $4$ bytes of memory.

\paragraph*{Metrics:}
We report two main metrics of interest \first average classification time measured as the average number of traversed nodes and \second memory consumed measured in KiloBytes (KB).

In measuring the nodes traversed, we count the number of nodes accessed during lookup and the number of swapped nodes during rearrangement.
For all our baselines, we only count the number of nodes accessed during lookup since they do not perform any rearrangements.
For MRF, we simply add the nodes traversed during lookup and rearrangement: the cost of a~packet match is comparable to the cost of operations that accompany the swap (checking a~rule overlap).

\begin{figure}[!ht]
\centering
\includegraphics[width=0.4\linewidth]{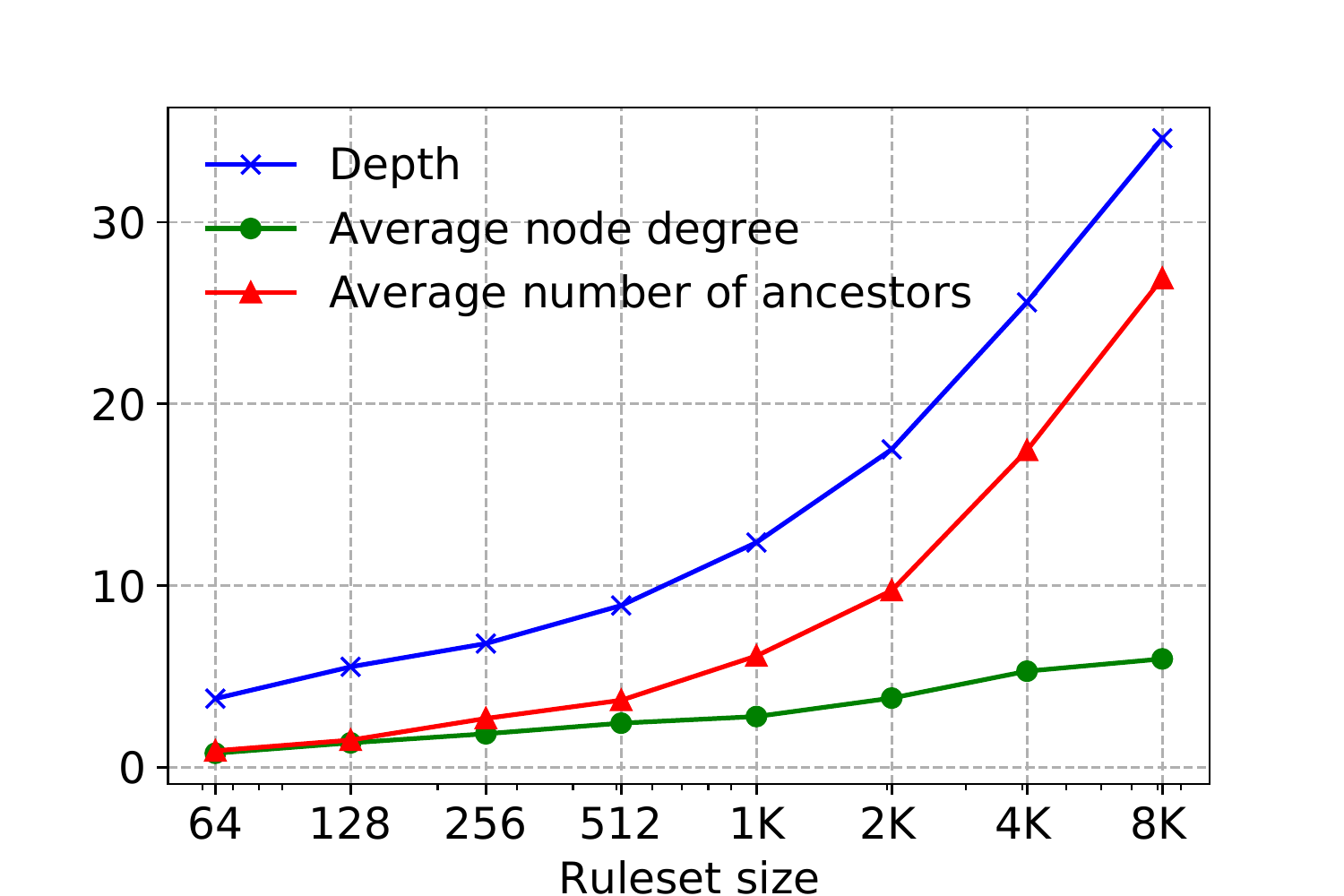}
\caption{Increase of maximum node depth, average node degree, and the average number of ancestors in the dependency graph increase with the ruleset size.}
\label{fig:dag-stats}
\end{figure}

For MRF-Fast, we divide the rearrangement cost by a~factor $\alpha=5$ (see also Section~\ref{sec:non-uniform}): a~packet match requires one comparison with each of the $5$-tuple header fields, whereas the operations that accompany the swap requires only one comparison.
All the values reported in our evaluation are averaged over $32$ runs.

\subsection{Results}
Before presenting our results, we analyze the characteristics of the rulesets used in our evaluations.
Specifically, we are interested in the diversity of the dependency graph's structure across ruleset sizes.
In Figure~\ref{fig:dag-stats}, we fix acl1\_seed provided by Classbench and generate rulesets of sizes in the range $64$ to $8192$.
We observe that some parameters increase with the ruleset size: maximum depth of nodes, average node degree, and the average number of ancestors.
All three metrics of the dependency graph structure influence the classification time.
The larger the average number of ancestors, the larger the classification time of Move-Recursively-Forward, at least for the memoryless implementation: the \emph{direct dependency} operation traverses the entire list of ancestors of a~moving node.

\begin{figure*}[h!]
  \center
  \begin{subfigure}{0.9\linewidth}
  \centering
  \includegraphics[width=1\textwidth]{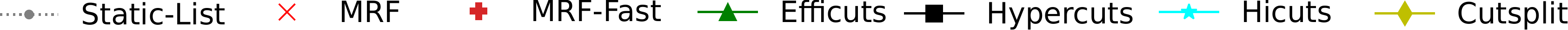}
  \end{subfigure}
  \begin{subfigure}{1\linewidth}
  \centering
  \includegraphics[width=0.47\textwidth]{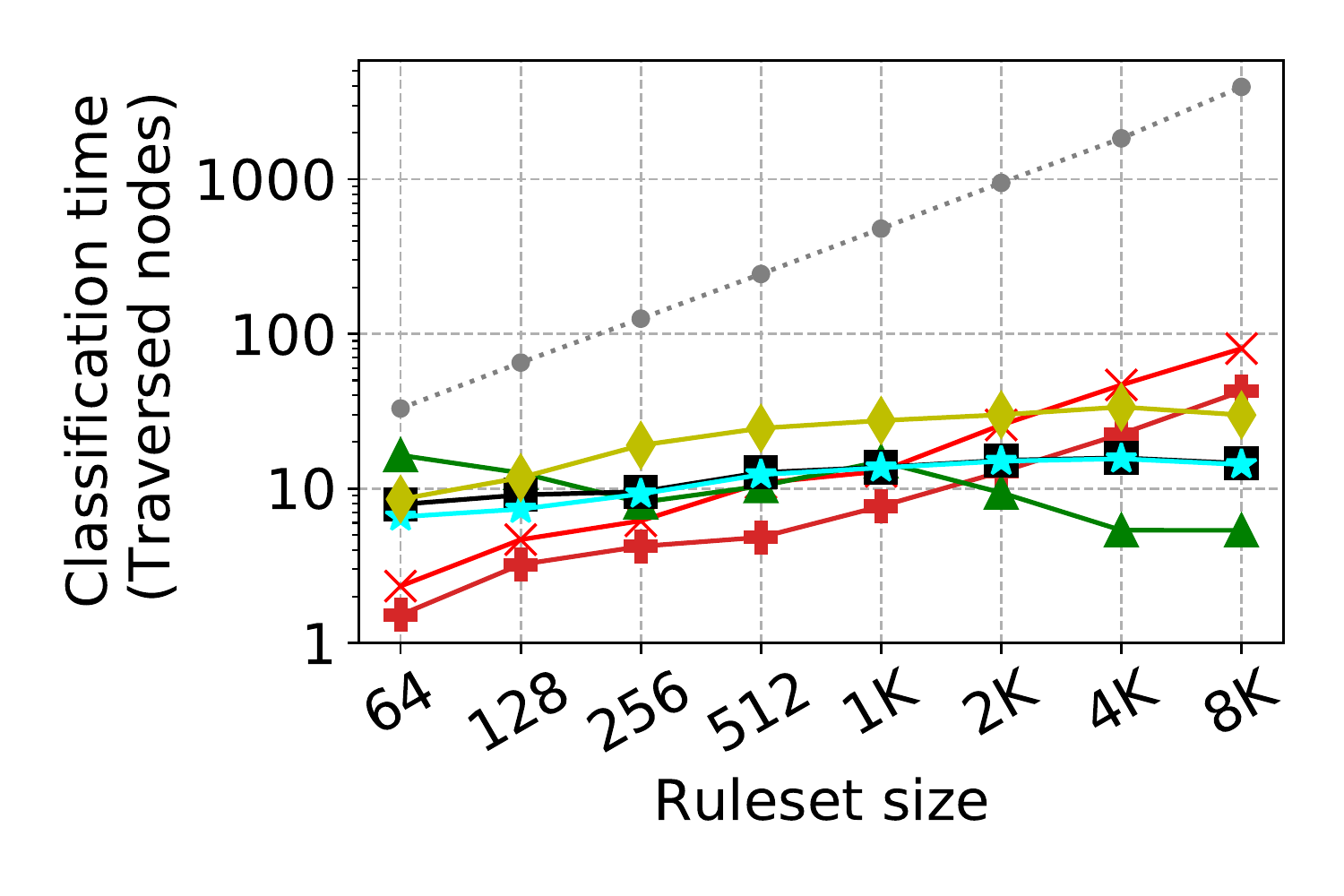}
  \caption{Classification time under high-locality in traffic}
  \label{fig:size-high-locality}
  \end{subfigure}\hfill
  \begin{subfigure}{0.48\linewidth}
  \centering
  \includegraphics[width=0.9\textwidth]{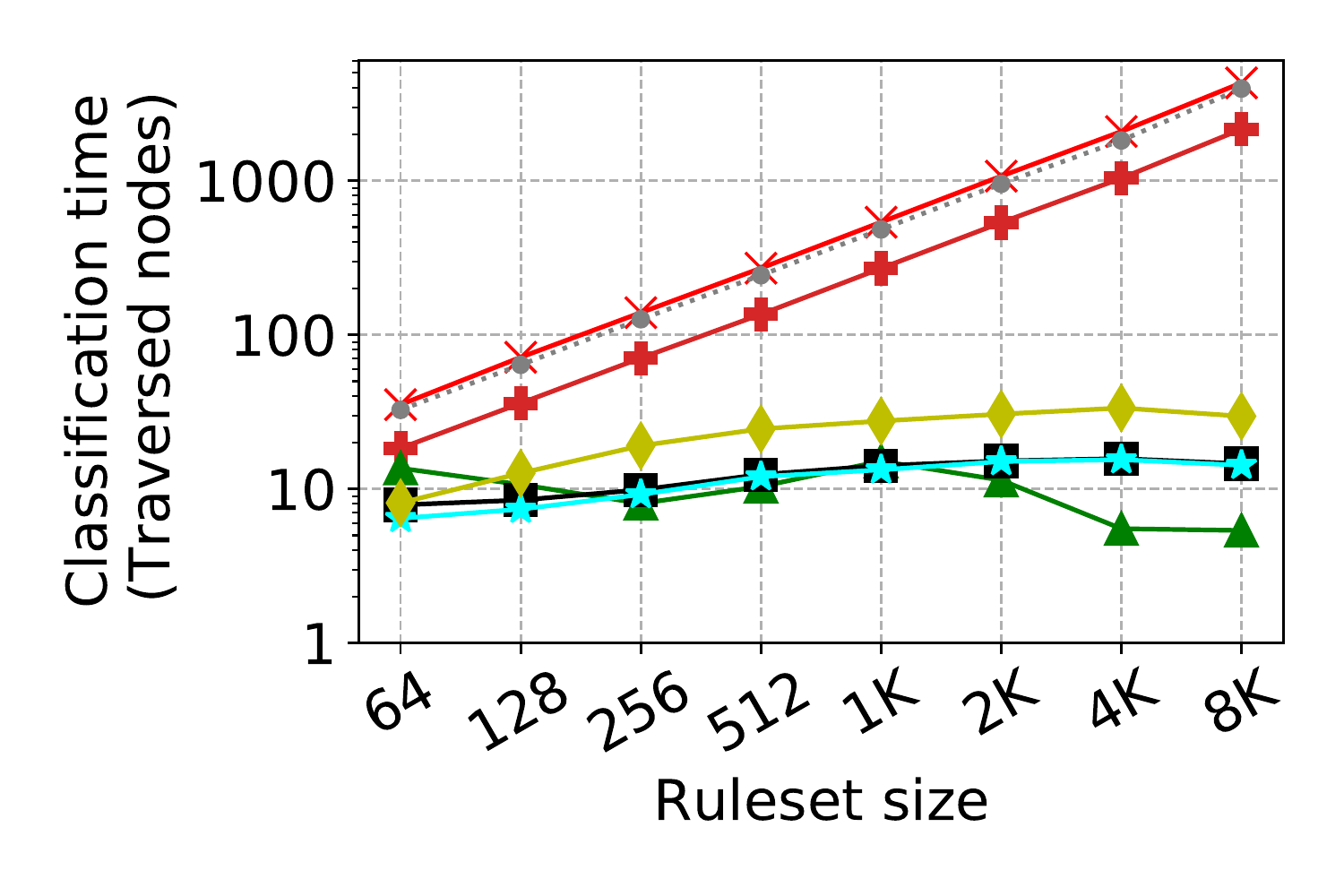}
  \caption{Classification time under low-locality in traffic}
  \label{fig:size-low-locality}
  \end{subfigure}\hfill
  \begin{subfigure}{0.48\linewidth}
  \centering
  \includegraphics[width=0.9\textwidth]{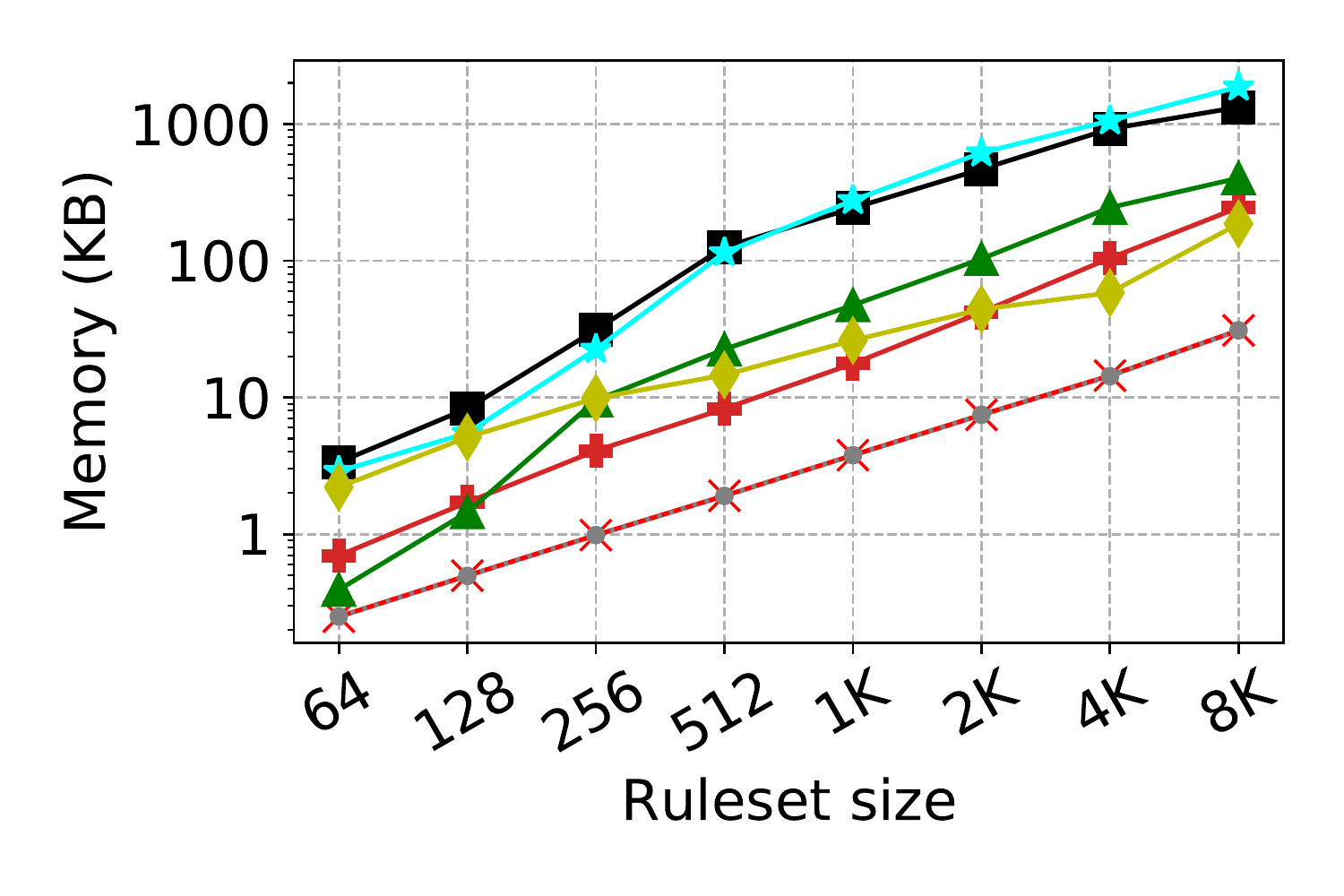}
  \caption{Memory usage: $10$x lower memory consumption by MRF}
  \label{fig:size-memory}
  \end{subfigure}
  \caption{The self-adjusting list-based packet classifier \MRF outperforms decision tree-based algorithms under high locality in traffic for ruleset sizes up to $1$K and significantly improves the memory requirements ($10$x on average). \MRF-Fast trades memory for classification time, showing an improvement over \MRF in classification time with slightly more memory requirements. Note the log scale in the figures.}
\end{figure*}

\paragraph*{MRF under high traffic locality.}
In Figure~\ref{fig:size-high-locality}, we show the average number of nodes traversed with a~high locality in traffic for various ruleset sizes.
For small ruleset sizes in the range $64$ up to $1$K, \MRF outperforms in classification time compared to all our baselines.
For ruleset size of $64$: \MRF performs $7.01$x better than Efficuts, $3.64$x better than CutSplit, and $14.06$x better than a~static-list.
For ruleset size of $1$K: MRF performs $1.15$x better than Efficuts, $2.12$x better than CutSplit, and $37.04$x better than a~static-list.
For larger ruleset sizes ($>1K$): \MRF performs $15$x worse compared to Efficuts and $2.7$x worse compared to CutSplit, yet $49$x better that a~static-list.

For a~small ruleset size (e.g., $64$), we can see from Figure~\ref{fig:dag-stats} that the average number of ancestors is much lower.
This allows \MRF to move the frequently matched rules closer to the head of the list, which significantly improves classification time under high locality.
For large ruleset sizes ($>1$K) the average node ancestors grow up to $30$, which does not allow moving the frequently matched rules closer to the head (a rule cannot be moved ahead of dependencies).

\paragraph*{MRF under low traffic locality.}
We evaluate the performance of \MRF under low locality in traffic even though \MRF's design is not targeted for this case.
A trace with low traffic locality in our evaluations consists of unique packets, i.e., a~packet arrives only once in a~trace, and each packet matches a~rule in the ruleset uniformly at random.
As a~result, the number of nodes that \MRF traverses on average is nearly half the ruleset size.
 In Figure~\ref{fig:size-low-locality}, we observe that the classification time of \MRF is comparable to a~static list for all ruleset sizes. On the other hand, \MRF-Fast slightly improves the classification time compared to MRF but still performs worse compared to decision tree-based classifiers.

\paragraph*{MRF memory consumption.} \MRF significantly reduces memory requirements for packet classification. The memory requirement of \MRF reaches the lower bound for uncompressed data structures (see Section~\ref{sec:efficient}). In Figure~\ref{fig:size-memory} we show the memory requirements of different packet classifiers across various ruleset sizes generated using acl1\_seed.
We observe that \MRF requires on average across various ruleset sizes $10.24$x lower memory compared to Efficuts, $43.95$x lower memory compared to Hypercuts, and $7.58$x lower memory compared to CutSplit.
In Figure~\ref{fig:memory-256} we further evaluate the memory requirements across all the ruleset seeds available in Classbench. We observe significant improvement in memory consumption while the classification time of \MRF is on-par compared to Efficuts and CutSplit as shown in Figure~\ref{fig:ct-256}.

\begin{figure*}
  \center
  \begin{subfigure}{1\linewidth}
  \centering
  \includegraphics[width=0.4\textwidth]{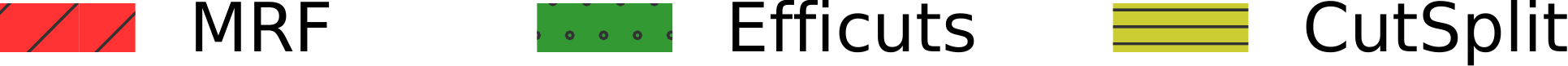}
  \end{subfigure}
  \begin{subfigure}{0.98\linewidth}
  \centering
  \includegraphics[width=1\textwidth]{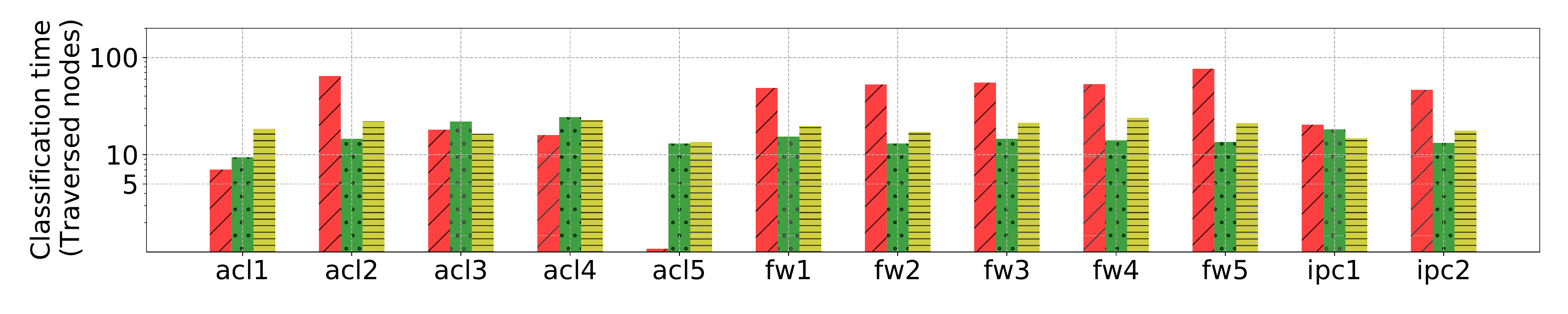}
  \caption{Classification time under high-locality in traffic and ruleset size $256$.}
  \label{fig:ct-256}
  \end{subfigure}
  \begin{subfigure}{0.98\linewidth}
  \centering
  \includegraphics[width=1\textwidth]{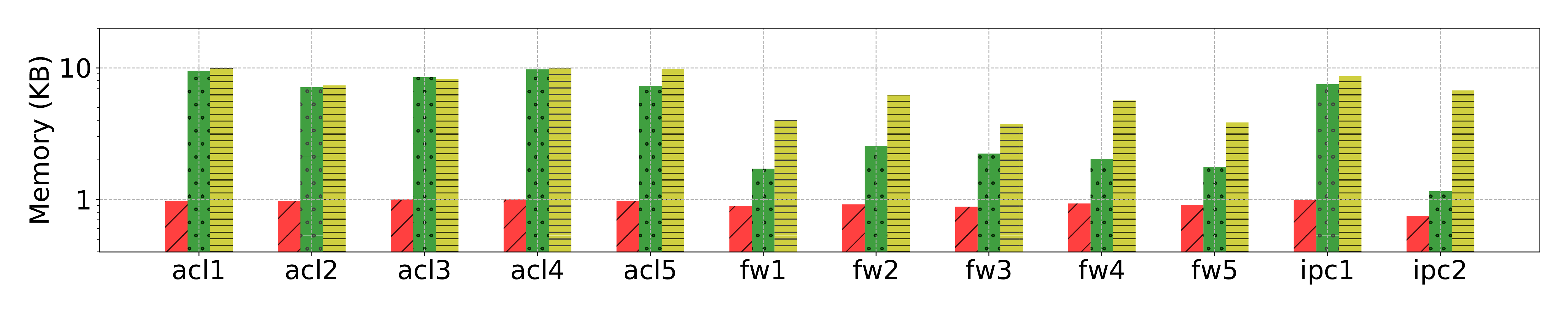}
  \caption{Memory usage for ruleset size $256$.}
  \label{fig:memory-256}
  \end{subfigure}
  \caption{For small rulesets, MRF's classification time is on-par compared to Efficuts but with a~significantly lower memory usage. Note the log scale in the figures.}
\end{figure*}

\section{Conclusions and Future Directions}
\label{sec:conclusions}

We proposed a~packet classifier design that adjusts to traffic and performs provably well. The viability of the solution improves as the locality of traffic increases or the ruleset size decreases, as also demonstrated empirically.

This paper leaves open several research avenues. In particular, it would be interesting to evaluate our drop-in replacement for static lists of rules in existing systems (e.g., 5G packet detection rules), and in particular in existing packet classifiers (e.g., in leaves of decision trees). It would also be interesting to refine our implementation, especially around the \emph{direct dependency} operation, and study hybrid system designs, i.e., self-adjusting lists which may offload existing packet classifiers upon detecting that traffic exhibits high locality.
Design opportunities may also arise for self-adjusting \emph{trees} in packet classification.

To strengthen the theoretical analysis, we point beyond the pure competitive analysis of Move-Recursively-Forward.
In the classic list access setting, the algorithms perform provably better under locality of input (measured in \emph{runs}, a~metric specific to list access)~\cite{Albers16}, and empirical evidence collected in our work similarly suggests that our algorithm may improve too.
In the \emph{average-case analysis} for the stochastic setting, the competitive ratio of Move-To-Front improves from $4$ to $\pi$~\cite{GonnetMS79} (in the paid exchange model), a~behavior that the Move-Recursively-Forward algorithm may share.

\bibliographystyle{plain}
\bibliography{references}

\begin{thebibliography}{10}

\bibitem{Albers1998}
Susanne Albers.
\newblock Improved randomized on-line algorithms for the list update problem.
\newblock {\em {SIAM} J. Comput.}, 27(3):682--693, 1998.

\bibitem{Albers2020}
Susanne Albers and Maximilian Janke.
\newblock New bounds for randomized list update in the paid exchange model.
\newblock In {\em Proc. of the International Symposium on Theoretical Aspects
  of Computer Science, {STACS}}, volume 154, pages 1--17. {Schloss Dagstuhl},
  2020.

\bibitem{Albers16}
Susanne Albers and Sonja Lauer.
\newblock On list update with locality of reference.
\newblock {\em J. Comput. Syst. Sci.}, 82(5):627--653, 2016.

\bibitem{Ambuhl00}
Christoph Amb{\"{u}}hl.
\newblock Offline list update is np-hard.
\newblock In {\em 8th Annual European Symposium on Algorithms, {ESA}}, volume
  1879, pages 42--51. Springer, 2000.

\bibitem{Azar2002}
Yossi Azar and Leah Epstein.
\newblock On-line scheduling with precedence constraints.
\newblock {\em Discret. Appl. Math.}, 119(1-2):169--180, 2002.

\bibitem{BachrachER02}
Ran Bachrach, Ran El{-}Yaniv, and M.~Reinstadtler.
\newblock On the competitive theory and practice of online list accessing
  algorithms.
\newblock {\em Algorithmica}, 32(2):201--245, 2002.

\bibitem{Bienkowski2017}
Marcin Bienkowski, Jan Marcinkowski, Maciej Pacut, Stefan Schmid, and
  Aleksandra Spyra.
\newblock Online tree caching.
\newblock In {\em Proc. of the {ACM} Symposium on Parallelism in Algorithms and
  Architectures, {SPAA}}, pages 329--338. {ACM}, 2017.

\bibitem{Borodin1998}
Allan Borodin and Ran {El-Yaniv}.
\newblock {\em Online Computation and Competitive Analysis}.
\newblock Cambridge University Press, 1998.

\bibitem{Chrobak1991}
Marek Chrobak and Lawrence~L. Larmore.
\newblock The server problem and on-line games.
\newblock In {\em On-Line Algorithms, Proc. of a {DIMACS} Workshop}, volume~7
  of {\em {DIMACS} Series in Discrete Mathematics and Theoretical Computer
  Science}, pages 11--64. {DIMACS/AMS}, 1991.

\bibitem{Eppstein2001}
David Eppstein and Shan Muthukrishnan.
\newblock Internet packet filter management and rectangle geometry.
\newblock In {\em Proc. of the Twelfth Annual Symposium on Discrete Algorithms,
  {SODA}}, pages 827--835. {ACM/SIAM}, 2001.

\bibitem{GonnetMS79}
Gaston~H. Gonnet, J.~Ian Munro, and Hendra Suwanda.
\newblock Toward self-organizing linear search.
\newblock In {\em Proc. of the 20th Annual Symposium on Foundations of Computer
  Science, {FOCS}}, pages 169--174. {IEEE} Computer Society, 1979.

\bibitem{820051}
Pankaj Gupta and Nick McKeown.
\newblock Classifying packets with hierarchical intelligent cuttings.
\newblock {\em IEEE Micro}, 20(1):34--41, 2000.

\bibitem{gupta2001algorithms}
Pankaj Gupta and Nick McKeown.
\newblock Algorithms for packet classification.
\newblock {\em IEEE Network}, 15(2):24--32, 2001.

\bibitem{Hamed2006}
Hazem~H. Hamed and Ehab Al{-}Shaer.
\newblock Dynamic rule-ordering optimization for high-speed firewall filtering.
\newblock In {\em Proc. of the {ACM} Symposium on Information, Computer and
  Communications Security, {ASIACCS}}, pages 332--342. {ACM}, 2006.

\bibitem{doi:10.1080/00401706.1987.10488243}
Jonathan~R.M. Hosking and James~R. Wallis.
\newblock Parameter and quantile estimation for the generalized pareto
  distribution.
\newblock {\em Technometrics}, 29(3):339--349, 1987.

\bibitem{Kamali2013}
Shahin Kamali, Susana Ladra, Alejandro L{\'{o}}pez{-}Ortiz, and Diego Seco.
\newblock Context-based algorithms for the list-update problem under
  alternative cost models.
\newblock In {\em Proc. of the Data Compression Conference, {DCC}}, pages
  361--370, 2013.

\bibitem{KamaliSurvey2013}
Shahin Kamali and Alejandro L{\'{o}}pez{-}Ortiz.
\newblock A survey of algorithms and models for list update.
\newblock In {\em Space-Efficient Data Structures, Streams, and Algorithms},
  volume 8066, pages 251--266. Springer, 2013.

\bibitem{KoganNRCE14}
Kirill Kogan, Sergey~I. Nikolenko, Ori Rottenstreich, William Culhane, and
  Patrick Eugster.
\newblock {SAX-PAC} (scalable and expressive packet classification).
\newblock In {\em {ACM} {SIGCOMM}}, pages 15--26. {ACM}, 2014.

\bibitem{li2018cutsplit}
Wenjun Li, Xianfeng Li, Hui Li, and Gaogang Xie.
\newblock Cutsplit: {A} decision-tree combining cutting and splitting for
  scalable packet classification.
\newblock In {\em {IEEE} Conference on Computer Communications, {INFOCOM}},
  pages 2645--2653, 2018.

\bibitem{Mccabe1965}
John McCabe.
\newblock On serial files with relocatable records.
\newblock {\em Operations Research}, 13(4):609--618, 1965.

\bibitem{MianoBRBLP19}
Sebastiano Miano, Matteo Bertrone, Fulvio Risso, Mauricio~V{\'{a}}squez Bernal,
  Yunsong Lu, and Jianwen Pi.
\newblock Securing linux with a faster and scalable iptables.
\newblock {\em {ACM} {SIGCOMM} Computer Communication Review, {CCR}},
  49(3):2--17, 2019.

\bibitem{MoylesT69}
Dennis~M. Moyles and Gerald~L. Thompson.
\newblock An algorithm for finding a minimum equivalent graph of a digraph.
\newblock {\em J. {ACM}}, 16(3):455--460, 1969.

\bibitem{Munro2000}
J.~Ian Munro.
\newblock On the competitiveness of linear search.
\newblock In {\em Proc. of the European Symposium, {ESA}}, volume 1879, pages
  338--345, 2000.

\bibitem{openflow}
{ONF}.
\newblock Openflow reference release.
\newblock \url{https://github.com/mininet/openflow}, 2013.

\bibitem{Reingold1994}
Nick Reingold, Jeffery~R. Westbrook, and Daniel~Dominic Sleator.
\newblock Randomized competitive algorithms for the list update problem.
\newblock {\em Algorithmica}, 11(1):15--32, 1994.

\bibitem{SarrarUFSH12}
Nadi Sarrar, Steve Uhlig, Anja Feldmann, Rob Sherwood, and Xin Huang.
\newblock Leveraging zipf's law for traffic offloading.
\newblock {\em {ACM} {SIGCOMM} Computer Communication Review, {CCR}},
  42(1):16--22, 2012.

\bibitem{singh2003packet}
Sumeet Singh, Florin Baboescu, George Varghese, and Jia Wang.
\newblock Packet classification using multidimensional cutting.
\newblock In {\em Proc. of the Conference on Applications, technologies,
  architectures, and protocols for computer communications, {ACM} {SIGCOMM}},
  pages 213--224. {ACM}, 2003.

\bibitem{sleator1985amortized}
Daniel~D. Sleator and Robert~E. Tarjan.
\newblock Amortized efficiency of list update and paging rules.
\newblock {\em Commun. {ACM}}, 28(2):202--208, 1985.

\bibitem{SleatorT85}
Daniel~D. Sleator and Robert~E. Tarjan.
\newblock Self-adjusting binary search trees.
\newblock {\em J. {ACM}}, 32(3):652--686, 1985.

\bibitem{Srinivasan1999}
Venkatachary Srinivasan, Subhash Suri, and George Varghese.
\newblock Packet classification using tuple space search.
\newblock In {\em Proc. of the {ACM} {SIGCOMM}}, pages 135--146. {ACM}, 1999.

\bibitem{4237157}
David~E. Taylor and Jonathan~S. Turner.
\newblock Classbench: A packet classification benchmark.
\newblock {\em IEEE/ACM Transactions on Networking}, 15(3):499--511, 2007.

\bibitem{VamananVV10}
Balajee Vamanan, Gwendolyn Voskuilen, and T.~N. Vijaykumar.
\newblock Efficuts: optimizing packet classification for memory and throughput.
\newblock In {\em Proc. of the Conference on Applications, Technologies,
  Architectures, and Protocols for Computer Communications, {ACM} {SIGCOMM}},
  pages 207--218. {ACM}, 2010.

\bibitem{Zhao04}
Liang Zhao, Yuji Inoueand, and Hideo Yamamoto.
\newblock Delay reduction for linear-search based packet filters.
\newblock In {\em Proc. of the International Technical Conference on
  Circuits/Systems,Computers and Communication, {ITC-CSCC}}, pages 160--163,
  2004.

\end{thebibliography}

\balance

\end{document}